\newcommand{\labelAlphabet}{\Gamma}
\newcommand{\zigzagnumber}{{zn}}
\newcommand{\sizeSliceGraph}{{r}}
\newcommand{\widthG}{{q}}
\newcommand{\interpretedAlphabet}{\slicealphabet^{c,\mathcal{X}}}
\newcommand{\interpretedAlphabetMinusX}{\slicealphabet^{c,\mathcal{X}\backslash \{X\}}}
\newcommand{\msotwo}{{$\mbox{MSO}_2$\;}}
\newcommand{\msoone}{{$\mbox{MSO}_1$\;}}
\newcommand{\boldS}{{\mathbf{S}}}
\newcommand{\boldU}{{\mathbf{U}}}
\newcommand{\unitdecomposition}{{\mathbf{U}}}
\newcommand{\lang}{{\mathcal{L}}} 
\newcommand{\slicegraphvertex}{\mathfrak{v}}
\newcommand{\N}{{\mathbold{N}}}
\newcommand{\slicegraph}{\mathcal{S}\!\mathcal{G}}
\newcommand{\slicealphabet}{\Sigma_{\mathbb{S}}}
\newcommand{\keywords}[1]{\par\addvspace\baselineskip\noindent\keywordname\enspace\ignorespaces#1}
\newcommand{\graph}{{\mathcal{G}}}
\newcommand{\subslicesNumber}{{\mathcal{S}\mathcal{U}\mathcal{B}\mathcal{N}}}
\newcommand{\emptyslice}{{\bm{\varepsilon}}}
\newcommand{\subgraphsSlicegraph}{\mathcal{S}\mathcal{U}\mathcal{B}}
\newcommand{\numberingExtension}{{\mathcal{N}}}
\begin{document}

\title{Subgraphs Satisfying MSO Properties on $z$-Topologically Orderable Digraphs}
\author{Mateus de Oliveira Oliveira}
\institute{School of Computer Science and Communication, \\ KTH Royal Institute of Technology, 
100-44 Stockholm, Sweden\\ mdeoliv@kth.se}

\maketitle

\vspace{-20pt}
\begin{abstract} \noindent
We introduce the notion of $z$-topological orderings for digraphs. 
We prove that given a digraph $G$ on $n$ vertices admitting a $z$-topological ordering, 
together with such an ordering, one may count the number of subgraphs of $G$ that at the same time satisfy a monadic second order 
formula $\varphi$ and are the union of $k$ {\bf directed} paths, 
in time $f(\varphi,k,z)\cdot n^{O(k\cdot z)}$. 
Our result implies the polynomial time solvability of many  
natural counting problems on digraphs admitting $z$-topological orderings 
for constant values of $z$ and $k$. 
Concerning the relationship between 
$z$-topological orderability and other digraph width measures, we observe that 
any digraph of {\bf directed} path-width $d$ has a $z$-topological ordering 
for $z\leq 2d+1$. On the other hand, there are digraphs on $n$ vertices admitting a 
$z$-topological order for $z=2$, but whose directed path-width is $\Theta(\log n)$. Since graphs of bounded {\bf directed} path-width can have both 
arbitrarily large {\bf undirected} tree-width and arbitrarily large clique width, 
our result provides for the first time a suitable way of partially transposing metatheorems 
developed in the context of the monadic second order logic of graphs of 
constant {\bf undirected} tree-width and constant clique width to the realm of digraph width measures
that are closed under taking subgraphs and whose constant levels 
incorporate families of graphs of arbitrarily large undirected tree-width and 
arbitrarily large clique width. 
\vspace{-10pt}
\keywords{Slice Theory, Digraph Width Measures, Monadic Second 
Order Logic of Graphs, Algorithmic Meta-theorems}
\end{abstract}
\vspace{-20pt}

\section{Introduction} 
\label{section:Introduction} 
\vspace{-10pt}
Two cornerstones of parametrized complexity theory are Courcelle's theorem \cite{Courcelle1990}
stating that monadic second order logic properties may be model checked in linear time in graphs of constant undirected tree-width, and its 
subsequent generalization to counting given by Arnborg, Lagergren and Seese \cite{ArnborgLagergrenSeese1991}.
The importance of such metatheorems stem from the fact that several NP-complete problems 
such as Hamiltonicity, colorability, and their respective $\#$P-hard counting counterparts, can be modeled in terms 
of \msotwo sentences and thus can be efficiently solved in graphs of constant {\bf undirected} tree-width. 

In this work we introduce the notion of $z$-topological orderings for digraphs and provide a suitable way of partially transposing the metatheorems 
in \cite{Courcelle1990,ArnborgLagergrenSeese1991} to digraphs admitting {\em $z$-topological orderings} for constant values of $z$. 
In order to state our main result we will first give a couple of easy definitions: 
Let $G=(V,E)$ be a directed graph. For subsets of vertices $V_1,V_2 \subseteq V$ we let $E(V_1,V_2)$ 
denote the set of edges with one endpoint in $V_1$ and another endpoint in $V_2$. We say that a linear ordering
$\omega= (v_1,v_2,...,v_n)$ of the vertices of $V$ is a $z$-topological ordering 
of $G$ if for every {\bf directed} simple path $p=(V_p,E_p)$ in $G$ and every $i$ with 
$1\leq i \leq n$, we have that $|E_p\cap E(\{v_1...,v_i\},\{v_{i+1},...,v_{n}\})| \leq z$.
In other words, $\omega$ is a $z$-topological ordering if every {\bf directed} simple path of $G$ bounces
back and forth at most $z$ times along $\omega$. The terminology $z$-topological ordering
is justified by the fact that any topological ordering of a DAG $G$ according to the usual 
definition, is a $1$-topological ordering according to our definition. Conversely 
if a digraph admits a $1$-topological ordering, then it is a DAG. 
We denote by $MSO_2$ the monadic second order logic of graphs with edge set quantification. An edge-weighting 
function for a digraph $G=(V,E)$ is a function $w:E\rightarrow \Omega$ where $\Omega$ is a finite commutative semigroup 
of size polynomial in $|V|$ whose elements are totally ordered. The weight of a subgraph $H=(V,E')$ of 
$G$ is defined as $w(H)=\sum_{e\in E'} w(e)$. A maximal-weight subgraph of $G$ satisfying a given
property $\varphi$ is a subgraph $H=(V,E')$ such that $H\models \varphi$ and such that for any other 
subgraph $H'=(V'',E'')$ of $G$ such that $H'\models \varphi$ we have $w(H)\geq w(H')$. 
Now we are in a position to state our main theorem:

\begin{theorem}[Main Theorem]
\label{theorem:MainTheorem}
For each \msotwo formula $\varphi$ and each positive integers $k,z\in \N$ there exists 
a computable function $f(\varphi,z,k)$ such that: Given a digraph $G=(V,E)$ of zig-zag number $z$ 
on $n$ vertices, a weighting function $w:E\rightarrow \Omega$, a $z$-topological ordering $\omega$ of $G$ and 
a number $l=O(n)$\footnote{Observe that $l$ can be as large as $n$.}, we can count in time $f(\varphi,z,k)\cdot n^{O(z\cdot k)}$ the number of 
subgraphs $H$ of $G$ simultaneously satisfying the following four properties: 
\begin{enumerate}[(i)]
	\item  $H\models \varphi$ \hspace{2.2cm} 
 	\item  $H$ is the union of $k$ directed paths\footnote{A digraph $H$ is the union of $k$ directed paths if $H=\cup_{i=1}^k p_i$ for not 
necessarily vertex-disjoint nor edge-disjoint directed paths $p_1,...,p_k$.}
	\item  $H$ has $l$ vertices\hspace{0.9cm}
	\item  $H$ has maximal weight
\end{enumerate}
%
\end{theorem}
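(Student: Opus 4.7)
The plan is to build a dynamic-programming / automaton-theoretic argument over the $z$-topological ordering $\omega$, viewing $G$ as a sequence of ``slices'' separated by cuts of the ordering. Let $C_i$ denote the cut $E(\{v_1,\dots,v_i\},\{v_{i+1},\dots,v_n\})$. The guiding observation is that any subgraph $H$ which is the union of $k$ directed simple paths has at most $k\cdot z$ edges crossing any single $C_i$, since each of the $k$ paths contributes at most $z$ crossings by definition of a $z$-topological ordering. Thus the ``interface'' between the processed and unprocessed parts of any candidate $H$ at position $i$ has bounded combinatorial width, which is exactly the setup in which automata-theoretic recognition works.

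First, I would make this precise by constructing from $(G,\omega)$ a canonical slice-decomposition: a sequence $\sliceletter_1,\sliceletter_2,\dots,\sliceletter_n$ of unit slices where $\sliceletter_i$ introduces the vertex $v_i$ together with all edges between $v_i$ and $\{v_1,\dots,v_{i-1}\}$, with a labeling on the at most $O(z)$ ``in-frontier'' and ``out-frontier'' positions that must be matched across the cut. Every subgraph $H$ of $G$ then corresponds to a way of marking, inside each $\sliceletter_i$, which local edges are taken. Since we additionally restrict to subgraphs of the form ``union of $k$ directed paths'', the interface state at cut $C_i$ is a combinatorial object of size $n^{O(kz)}$: it records, for each of the at most $kz$ crossing edges, which of the $k$ ``partial paths'' it belongs to, its orientation relative to that path, and a pointer to the vertex in $\{v_1,\dots,v_i\}$ at which the open path segment currently ends. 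This gives a slice automaton $\stepautomaton_{\text{path}}$ with $n^{O(kz)}$ states recognising the marked slice sequences whose marks encode a union of $k$ directed paths.

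Next, I would invoke the slice-automaton counterpart of Courcelle's theorem to obtain an automaton $\stepautomaton_\varphi$ of size $f(\varphi,z)$ that, running on annotated slice sequences, accepts precisely those annotations whose underlying subgraph satisfies $\varphi$. The feasibility of this step relies on the author's slice-theoretic machinery, where MSO$_2$-definability over slice sequences of bounded frontier width translates into finite-state recognisability; here the frontier width is $O(z)$, so the resulting automaton has size depending only on $\varphi$ and $z$. I would then take the product $\stepautomaton = \stepautomaton_\varphi \times \stepautomaton_{\text{path}}$, yielding a slice automaton of size $f(\varphi,z,k)\cdot n^{O(kz)}$ whose accepted annotations correspond bijectively to subgraphs $H$ satisfying (i) and (ii).

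Finally, to enforce (iii) and (iv) and to count, I would lift the product automaton to a semiring-weighted automaton whose transitions emit, on marking an edge of weight $w(e)$ and possibly a new vertex, the pair (vertex-contribution, weight-contribution) into the semiring $\N \times \Omega$ refined by the vertex count $\leq l$. A standard forward pass over the slice sequence then computes, for each reachable state and each vertex count $j \in \{0,\dots,l\}$ and each weight $\omega\in\Omega$, the number of accepted annotations realising that combination. Since $|\Omega|$ is polynomial in $n$, $l \leq n$, and the state space is $f(\varphi,z,k)\cdot n^{O(kz)}$, the total running time is $f(\varphi,z,k)\cdot n^{O(kz)}$, and the answer is obtained by reading off, at the final state, the number of annotations with $j=l$ vertices and maximum achievable weight. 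The main obstacle is designing $\stepautomaton_{\text{path}}$ so that its interface state is faithful enough to guarantee the global union-of-$k$-paths condition on acceptance yet compact enough to stay within $n^{O(kz)}$ states; the key trick is that the ordering $\omega$ itself provides a canonical labelling of endpoints by vertex indices, avoiding any blow-up from having to symbolically name open path endpoints.
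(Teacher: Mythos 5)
Your overall strategy is the same as the paper's: sweep along $\omega$, observe that a subgraph that is the union of $k$ directed paths has at most $k\cdot z$ marked edges crossing any cut (the paper's Proposition \ref{proposition:SliceWidthFacts}), encode the subgraph as an annotation/sub-slice choice inside a unit decomposition of $G$ compatible with $\omega$, run a Courcelle-type automaton over slice sequences (the paper's Lemma \ref{lemma:MonadicRegularSet}), take a product, and finish with a weighted dynamic program that filters on $l$ and maximal weight and counts paths (the paper's Theorems \ref{theorem:SliceLanguageSubgraphs} and \ref{theorem:MainTechnical}).

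There is, however, a genuine gap in the counting step. Your automaton $\stepautomaton_{\text{path}}$ is inherently nondeterministic: its state guesses which of the $k$ open path segments each crossing edge belongs to and where each segment currently ends, and the same subgraph $H$ generally admits many different decompositions into $k$ directed paths, hence many accepting runs. A ``standard forward pass'' over a nondeterministic automaton counts accepting \emph{runs}, not accepted \emph{annotations}, so it overcounts subgraphs; and determinizing an automaton with $n^{O(kz)}$ states is not affordable, since the subset construction would be exponential in that quantity. The paper avoids this by never building a path-tracking automaton of polynomial size at all: the union-of-$k$-paths condition (and the zig-zag bound) are folded into the MSO formula itself as $Unitable(k)$ and $ZigZag(z)$, so the whole specification is compiled, once and for all, into a \emph{deterministic} slice graph over the parameter-bounded normalized alphabet $\slicealphabet^{k\cdot z}$ of size $f(\varphi,k,z)$; only afterwards is it blown up by the $q$-numbering and $\Omega$-weight expansions and intersected with the deterministic, acyclic subgraph slice graph $\subgraphsSlicegraph^{kz}(\boldU)$ of Lemma \ref{lemma:SliceGraphSubgraphs}. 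Determinism of every factor guarantees that each subgraph of $G$ corresponds to exactly one source-to-sink path of the final DAG, which is what makes path counting legitimate. A secondary point you gloss over: an automaton of size $f(\varphi,z)$ cannot by itself process annotated slices of $G$, whose frontiers have width up to the cut-width $q=O(n^2)$ --- it must know which of the $q$ frontier positions are marked and how they match across consecutive cuts; this is exactly the role of the paper's $q$-numbering expansion (and of the saturation argument ensuring the induced, dilated, non-normalized decomposition of $H$ is accepted), and it is where the $n^{O(kz)}$ factor properly belongs. Your proposal can be repaired along these lines, but as written the run-versus-annotation confusion makes the central counting claim incorrect.
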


Our result implies the polynomial time solvability of many natural counting problems on digraphs 
admitting $z$-topological orderings for constant values of $z$ and $k$. 
We observe that graphs admitting $z$-topological 
orderings for constant values of $z$ can already have simultaneously unbounded tree-width and 
unbounded clique-width, and therefore the problems that we deal with here cannot be tackled by the 
approaches in \cite{Courcelle1990,ArnborgLagergrenSeese1991,CourcelleMakowskyRotics2000}. 
For instance any DAG is $1$-topologically orderable. In particular, the $n\times n$ directed 
grid in which all horizontal edges are directed to the left and all vertical edges oriented 
up is $1$-topologically orderable, while it has both undirected tree-width $\Omega(n)$ and 
clique-width $\Omega(n)$. 

\vspace{-7pt}
\section{Applications}
\label{secttion:Application}
\vspace{-7pt}
To illustrate the applicability of Theorem \ref{theorem:MainTheorem} with a simple example, 
suppose we wish to count the number of Hamiltonian cycles on $G$. Then our formula $\varphi$ will express 
that the graphs we are aiming to count are cycles, namely, connected graphs in which 
each vertex has degree precisely two. Such a formula can be easily specified in \msotwo. 
Since any cycle is the union of two directed paths, we have $k=2$. Since we want all vertices to be 
visited our $l=n$. Finally, the weights in this case are not relevant, so  it is enough to set 
the semigroup $\Omega$ to be the one element semigroup $\{1\}$, and the weights of all edges to be $1$.
In particular the total weight of any subgraph of $G$ according to this semigroup will be $1$. 
By Theorem \ref{theorem:MainTheorem} we can count the number of Hamiltonian cycles in 
$G$ in time $f(\varphi,k,z)\cdot n^{2 z}$. We observe that Hamiltonicity can be 
solved within the same time bounds for other directed width measures, such as directed tree-width \cite{JohnsonRobertsonSeymourThomas2001}.

Interestingly, Theorem \ref{theorem:MainTheorem} allow us to count structures that are much more 
complex than cycles. And in our opinion it is rather surprising that counting such complex structures 
can be done in XP. For instance, we could choose to count the number of maximal Hamiltonian subgraphs
of $G$ which can be written as the union of $k$ directed paths. We can repeat this trick with virtually 
any natural property that is expressible in \msotwo. For instance we can count the number of 
maximal weight $3$-colorable subgraphs of $G$ that are the union of $k$-paths. Or the number of subgraphs
of $G$ that are the union of $k$ directed paths and have di-cuts of size $k/10$. Observe that our framework 
does not allow one to find a maximal di-cut of the whole graph $G$ nor to determine in polynomial time 
whether the whole graph $G$ is 3-colorable, since these problems are already NP-complete for DAGs, i.e.,
for $z=1$. 

If $H=(V,E)$ is a digraph, then the disorientation of $H$ is the undirected graph $H'$ obtained from $H$ by 
forgetting the orientation of its edges. In other words, we add an edge $(v,v')$ to $E$ whenever $(v',v)\in E$.
A very interesting application of Theorem \ref{theorem:MainTheorem} consists in counting the number
of maximal-weight subgraphs of $G$ which are the union of $k$ paths and whose disorientation satisfy some structural
property, such as, connectedness, planarity, bounded genus, bipartiteness, etc. The proof of the next corollary can 
be found in Appendix \ref{section:ProofsLowSections}.  

\begin{corollary}
\label{corollary:StructuralCounting}
Let $G=(V,E)$ be a digraph on $n$ vertices and $w:E\rightarrow\Omega$ be an edge weighting function. Then 
given a $z$-topological ordering $\omega$ of $G$ one may count in time $O(n^{k\cdot z})$ the number of maximal-weight subgraphs that 
are the union of $k$ directed paths and satisfy any combination of the following properties: 
$1)$ Connectedness, $2)$ Being a forest, $3)$ Bipartiteness, $4)$ Planarity, $5)$ Constant Genus $g$, $6)$ Outerplanarity, 
$7)$ Being Series Parallel, $8)$ Having Constant Treewidth $t$ $9$) Having 
Constant Branchwidth $b$, $10$) Satisfy any minor closed property.
\end{corollary}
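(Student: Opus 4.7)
The plan is to reduce Corollary \ref{corollary:StructuralCounting} to Theorem \ref{theorem:MainTheorem} by producing, for any chosen combination of the listed properties, a single \msotwo sentence $\varphi$ whose models on a candidate subgraph $H=(V,E')$ are exactly those whose disorientation has the desired combined property. Recall that $\{u,v\}$ is an undirected edge of the disorientation iff $(u,v)\in E'$ or $(v,u)\in E'$, so any \msotwo sentence written for undirected graphs translates mechanically into an \msotwo sentence on the digraph by rewriting every occurrence of the edge-incidence predicate as a disjunction over both orientations. Since \msotwo is closed under conjunction, it suffices to exhibit an individual formula for each item.

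For items $1)$--$3)$ the direct encodings are classical: connectedness asserts that every nontrivial bipartition of the vertex set touched by $E'$ admits a crossing edge in $E'$; the forest condition forbids a nonempty edge subset $F\subseteq E'$ such that the disoriented subgraph $(V,F)$ is connected and every incident vertex has degree exactly two (i.e.\ a cycle); and bipartiteness posits a partition $V=A\cup B$ with no edge of $E'$ lying inside $A$ or inside $B$. Items $4)$--$10)$ are all minor-closed, hence by the Graph Minor Theorem each is characterized by a finite family $\mathcal{F}$ of forbidden minors. For every fixed $F$, the statement ``$F$ is a minor of the disorientation'' is \msotwo-expressible by quantifying over $|V(F)|$ pairwise disjoint vertex subsets, asserting that each admits an edge subset of $E'$ connecting it internally, and requiring for every edge of $F$ the existence of an edge of $E'$ between the two corresponding branch sets; negating the disjunction over $F\in\mathcal{F}$ encodes membership in the minor-closed class.

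With $\varphi$ in hand I would invoke Theorem \ref{theorem:MainTheorem} with the given $\omega$, $w$ and $k$, running it once for each vertex count $l\in\{1,\ldots,n\}$, then selecting the global maximum of the reported weights and summing the counts attaining it; this adds only a linear overhead absorbed into $n^{O(k\cdot z)}$. The main obstacle is that the Graph Minor Theorem is non-constructive for an arbitrary minor-closed family, so the forbidden set $\mathcal{F}$ may not be explicitly computable; this is consistent with Theorem \ref{theorem:MainTheorem} requiring only that $f(\varphi,k,z)$ exist as a computable function once $\varphi$ is fixed. For the concrete instances $4)$--$9)$ the forbidden minors are effectively known, so the whole procedure becomes fully constructive.
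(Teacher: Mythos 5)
Your proposal is correct and follows essentially the same route as the paper: encode each listed property (the minor-closed ones via the Robertson--Seymour finite forbidden-minor characterization and the \msotwo expressibility of minor exclusion, the rest directly) as an \msotwo sentence on the disorientation, take the conjunction, and invoke Theorem \ref{theorem:MainTheorem}. Your extra care about translating edge incidence to the disoriented setting, iterating over $l$, and the non-constructivity of the forbidden-minor set for item $10$ only adds detail the paper leaves implicit.
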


The families of problems described above already incorporate a large number of natural combinatorial problems. 
However the monadic second order formulas expressing the problems above are relatively simple and can be 
written with at most two quantifier alternations. As Matz and Thomas have shown however, the monadic second order 
alternation hierarchy is infinite \cite{MatzThomas1997}. Additionally Ajtai, Fagin and Stockmeyer 
showed that each level $r$ of the polynomial hierarchy has a very natural complete problem, 
the $r$-round-$3$-coloring problem, that belongs to the $r$-th level of the monadic second order hierarchy
(Theorem 11.4 of \cite{AjtaiFaginStockmeyer2000}). Thus by Theorem \ref{theorem:MainTheorem} 
we may count the number of $r$-round-3-colorable subgraphs of $G$ that are the union of $k$ directed paths 
in time $f(\varphi_{r},z,k)\cdot n^{O(z\cdot k)}$.

We observe that the condition that the subgraphs we consider are the union of $k$ directed paths is 
not as restrictive as it might appear at a first glance. For instance one can show that for any $a,b\in N$
the $a\times b$ undirected grid is the union of $4$ directed paths. Additionally these grids have 
zig-zag number number $O(\min\{a,b\})$. Therefore counting the number of maximal grids of height $O(z)$
on a digraph of zig-zag number $z$ is a neat example of problem which can be tackled 
by our techniques but which cannot be formulated as a linkage problem, namely, the most successful 
class of problems that has been shown to be solvable in polynomial time for constant values of 
several digraph width measures \cite{JohnsonRobertsonSeymourThomas2001}. 
\vspace{-10pt}

\section{Overview of the Proof of Theorem \ref{theorem:MainTheorem}}
\label{section:Overview}
\vspace{-5pt}
We will prove Theorem \ref{theorem:MainTheorem} within the framework of regular slice languages, 
which was originally developed by the author to tackle several problems
within the partial order theory of concurrency \cite{deOliveiraOliveira2010,deOliveiraOliveira2012}.
The main steps of the proof of Theorem \ref{theorem:MainTheorem} are as follows. To each regular slice 
language $\lang$ we associate a possibly infinite set of digraphs $\lang_{\graph}$. 
In Section \ref{section:MainTheorem} we will define the notion of $z$-dilated-saturated regular slice language
and show that given any digraph $G$ together with a $z$-topological ordering 
$\omega=(v_1,v_2,...,v_n)$ of $G$, and any $z$-dilated-saturated slice language $\lang$, one may efficiently count 
the number of subgraphs of $G$ that are isomorphic to some digraph in $\lang_{\graph}$ ( Theorem \ref{theorem:MainTechnical}). 
Then in Section \ref{section:MSO} we will show that given any monadic second order formula $\varphi$
and any natural numbers $z,k$ one can construct a $z$-dilated-saturated regular slice language $\lang(\varphi,z,k)$ representing
the set of all digraphs that at the same time satisfy $\varphi$ and are the union of $k$ directed paths (Theorem \ref{theorem:MonadicTechnical}). 
The construction of $\lang(\varphi,z,k)$ is done once and for all for each $\varphi$,$k$ and $z$, and is completely independent
from the digraph $G$. Finally, the proof of 
Theorem \ref{theorem:MainTheorem} will follow by plugging Theorem \ref{theorem:MonadicTechnical} into Theorem \ref{theorem:MainTechnical}. 
Proofs of intermediate results omitted for a matter of clarity or due to lack of space can be found in the appendix. 

\vspace{-10pt}
\section{Comparison With Existing Work}
\label{section:Comparison}
\vspace{-10pt}

Since the last decade, the possibility of lifting the metatheorems in \cite{Courcelle1990,ArnborgLagergrenSeese1991} to the 
directed setting has been an active line of research. 
 Indeed, following an approach delineated by Reeds~\cite{Reed1999} and Johnson, Robertson, Seymour and 
Thomas~\cite{JohnsonRobertsonSeymourThomas2001}, several digraph width measures have been defined in terms 
of the number of cops needed to capture a robber in a certain evasion game on digraphs. From these variations
we can cite for example, directed tree-width~\cite{Reed1999,JohnsonRobertsonSeymourThomas2001},
DAG width~\cite{BerwangerDawarHunterKreutzerObdrzalek2012}, D-width~\cite{Safari2005,Gruber2007}, 
directed path-width~\cite{Barat2006}, entanglement~\cite{BerwangerGradel2004,BerwangerGradelKaiserRabinovich2012}, 
Kelly width~\cite{HunterKreutzer2008} and Cycle Rank~\cite{Eggan1963,GruberHolzer2008}. 
All these width measures have in common the fact that DAGs have the lowest possible constant width ($0$ or $1$ depending on the measure). 
Other width measures in which DAGs do not have necessarily constant width include 
DAG-depth~\cite{GanianHlinenyKneisLangerObdrzRossmanith2009}, and Kenny-width~\cite{GanianHlinenyKneisLangerObdrzRossmanith2009}. 

The introduction of the digraph width measures listed above was often accompanied by algorithmic implications. 
For instance, certain linkage problems that are NP-complete for general graphs, e.g. Hamiltonicity, can be 
solved efficiently in graphs of constant directed tree-width \cite{JohnsonRobertsonSeymourThomas2001}. The winner 
of certain parity games of relevance to the theory of $\mu$-calculus can be determined efficiently in digraphs of 
constant DAG width \cite{BerwangerDawarHunterKreutzerObdrzalek2012}, while it is not known if the same can be 
done for general digraphs. Computing disjoint paths of minimal weight, a problem which is NP-complete in 
general digraphs, can be solved efficiently in graphs of bounded Kelly width. However, except for such sporadic 
successful algorithmic implications, researchers have failed to come up with an analog of Courcelle's theorem 
for graph classes of constant width for any of the digraph width measures described above. It turns out that 
there is a natural barrier against this goal: It can be shown that unless all the problems in the polynomial hierarchy have sub-exponential algorithms, 
which is a highly unlikely assumption, \msotwo model checking is intractable in any class of graphs that is closed under 
taking subgraphs and whose undirected tree-width is poly-logarithmic unbounded \cite{Kreutzer2012,KreutzerTazari2010}.
An analogous result can be proved with respect to model checking of \msoone properties if we assume a non-uniform version 
of the extended exponential time hypothesis \cite{GanianHlinenyLangerObdrzalekRossmanithSikdar2012,GanianHlinenyKneisMeisterObdrzalekRossmanithSikdar2010}. 
All classes of digraphs of constant width with respect to the directed measures described above are closed under subgraphs 
and have poly-logarithmically unbounded tree-width, and thus fall into the impossibility theorem of \cite{Kreutzer2012,KreutzerTazari2010}. 
It is worth noting that Courcelle, Makowsky and Rotics have shown that \msoone model checking is tractable in classes of graphs 
of constant clique-width \cite{CourcelleMakowskyRotics2000,CourcelleMakowskyRotics2001}, and that these classes are poly-logarithmic 
unbounded, but they are not closed under taking subgraphs. 

We define the {\em zig-zag number} of a digraph $G$ to be the minimum $z$ for which $G$ has a $z$-topological ordering, and 
denote it by $\zigzagnumber(G)$. The zig-zag number is a digraph width measure that is closed under taking subgraphs and 
that has interesting connections with some of the width measures described above. In particular we can prove the following theorem 
stating that families of graphs of constant zig-zag number are strictly richer than families of graphs of constant directed 
path-width. 

\begin{theorem}
\label{theorem:DirectedPathWidthZigZagNumber}
Let $G$ be a digraph of directed path-width $d$. Then $G$ has 
zig-zag number $z\leq 2d+1$. Furthermore, given a directed path decomposition 
of $G$ one can efficiently derive a $z$-topological ordering of $G$. On the other 
hand, there are digraphs on $n$ vertices whose zig-zag number is $2$ but whose 
directed path-width is $\Theta(\log n)$. 
\end{theorem}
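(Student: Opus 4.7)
The plan is to prove the two halves of the theorem separately.

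For the first half, starting from a directed path decomposition $(X_1,\ldots,X_r)$ of $G$ of width $d$, let $a_v$ and $b_v$ denote the smallest and largest indices of bags containing $v$. I would define $\omega$ by listing the vertices in nondecreasing order of $a_v$, breaking ties inside each bag arbitrarily. Given a cut at the boundary between $X_j$ and $X_{j+1}$ (so $L=\{v:a_v\le j\}$ and $R$ its complement) and a simple directed path $p=x_0,\ldots,x_m$, I would analyze forward and backward crossings separately. The edge condition $a_u\le b_w$ of a directed path decomposition, applied to a backward-crossing edge $x_{k-1}\to x_k$ (with $x_{k-1}\in R$ and $x_k\in L$), combined with $a_{x_{k-1}}\ge j+1$, forces $b_{x_k}\ge j+1$, so the target $x_k$ lies in $X_j\cap X_{j+1}$; by simplicity of $p$, the number $|B|$ of backward crossings is at most $|X_j\cap X_{j+1}|\le d+1$. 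For the forward crossings, the key observation is that, except possibly for the first one, the source of each forward crossing coincides with the target of the preceding backward crossing, and hence also lies in $X_j\cap X_{j+1}$. Matching forward sources to backward targets and a small case analysis on whether $p$ starts and ends in $L$ or $R$ yields $|F|+|B|\le 2d+1$; cuts strictly inside a bag reduce to the bag-boundary case after a single within-bag swap. The ordering is clearly computable in linear time from the decomposition.

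For the second half, I would take $G_n$ to be the bidirected complete binary tree: the digraph obtained from the undirected complete binary tree on $n$ vertices by replacing each edge $\{u,w\}$ by the two arcs $u\to w$ and $w\to u$. Because both arcs must satisfy the decomposition edge condition, one obtains $a_u\le b_w$ and $a_w\le b_u$, so the intervals of $u$ and $w$ must overlap; hence every directed path decomposition of $G_n$ is also an undirected path decomposition of the underlying tree, and conversely. Consequently the directed path-width of $G_n$ equals the undirected path-width of the complete binary tree, which is known to be $\Theta(\log n)$. For the upper bound on the zig-zag number, I would order the vertices by a DFS preorder of the tree. Any simple directed path in $G_n$ is the unique tree path from some $u$ up to the lowest common ancestor $w$ and then down to some $v$. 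Since the descendants of any vertex form a contiguous block in a DFS preorder, a short case analysis on where the cut falls relative to the preorder blocks of the subtrees containing $u$ and $v$ shows that the ascending and descending halves of the path contribute at most one crossing each, giving $|F|+|B|\le 2$.

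The main technical obstacle is squeezing the tight constant $2d+1$ in Part 1: the forward-crossing sources must be carefully matched to backward-crossing targets inside the common bag $X_j\cap X_{j+1}$, rather than merely controlled by the loose bound $|F|\le |B|+1$, which would give only $2d+3$.
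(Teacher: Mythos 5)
Your second half (the bidirected complete binary tree, Barat's lemma, DFS preorder) is exactly the paper's argument and is correct.

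Your first half takes a genuinely different route from the paper. The paper does not argue directly from the path decomposition: it invokes the known equivalence between directed path-width and the \emph{directed vertex separation number} (citing Yang and Cao), takes an ordering $\omega$ with $dvsn(G,\omega)=d$, observes that a simple path therefore has at most $d$ backward-crossing edges at any cut, and then gets at most $d+1$ forward crossings by alternation. Your attempt to work directly with $(a_v,b_v)$ and the edge condition $a_u\le b_w$ is a reasonable, more self-contained alternative, and the bound $|B|\le |X_j\cap X_{j+1}|$ via the distinct backward-crossing targets is correct.

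However, your ``key observation'' is false, and this is a genuine gap. After a backward crossing lands on a vertex $x_b\in L$, the path may traverse several further vertices of $L$ before its next forward crossing; the source of that forward crossing need not equal $x_b$, so it need not lie in $X_j\cap X_{j+1}$. (Concretely, $a_u\le j$ for a forward source $u$ gives no lower bound on $b_u$, so $u\notin X_{j+1}$ in general.) Consequently the proposed matching does not exist, and the ``$2d+1$ via matching'' step fails; you are left with $|B|\le d+1$ and $|F|\le|B|+1$, i.e.\ $2d+3$. The fix does not require the matching at all: one should instead tighten the bound on $|B|$ to $d$. If width means $\max_i|X_i|$ (the paper's stated convention), then $|X_j\cap X_{j+1}|\le d$ already; if width means $\max_i|X_i|-1$, first reduce the decomposition so that no bag is contained in an adjacent bag, which forces $|X_j\cap X_{j+1}|\le d$. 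Then $|B|\le d$, $|F|\le d+1$, and the $2d+1$ bound follows. Finally, ``cuts strictly inside a bag reduce after a single within-bag swap'' is not a real argument as stated; what actually saves you is that for a cut strictly inside $X_m$ every backward target lies in $X_m\cap L$ and $X_m\cap R\neq\emptyset$, so $|B|\le|X_m\cap L|\le |X_m|-1\le d$ without any swap.
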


Theorem \ref{theorem:DirectedPathWidthZigZagNumber} legitimizes the algorithmic 
relevance of Theorem \ref{theorem:MainTheorem} since 
path-decompositions of graphs of constant directed path-width can be computed in polynomial time \cite{Tamaki2011}. 
The same holds with respect to the cycle rank of a graph since constant cycle-rank 
decompositions\footnote{By cycle-rank decomposition we mean a direct elimination forest\cite{Gruber2012}.}
can be converted into constant directed-path decompositions in polynomial time \cite{Gruber2012}. 
Therefore all the problems described in Section \ref{section:Introduction} can be 
solved efficiently in graphs of constant directed path-width and in graphs of constant cycle rank.
We should notice that our main theorem circumvents the impossibility 
results of \cite{Kreutzer2012,KreutzerTazari2010,GanianHlinenyLangerObdrzalekRossmanithSikdar2012,GanianHlinenyKneisMeisterObdrzalekRossmanithSikdar2010}
by confining the monadic second order logic properties to subgraphs that are the union of $k$ {\em directed}
paths. 

A pertinent question consists in determining whether we can eliminate either $z$ or $k$ from the exponent 
of the running time $f(\varphi,k,z)\cdot n^{O(k\cdot z)}$ stated in Theorem \ref{theorem:MainTheorem}.
The following two theorems say that under strongly plausible parameterized complexity assumptions \cite{DowneyFellows1992}, 
namely that $W[2]\neq FPT$ and $W[1]\neq FTP$, the dependence of both $k$ and $z$ in the exponent of the 
running time is unavoidable. 

\vspace{-1pt}
\begin{theorem}[Lampis-Kaouri-Mitsou\cite{LampisKaouriMitsou2011}]
\label{theorem:LampisKaouriMitsou}
Determining whether a digraph $G$ of cycle rank $z$ has a Hamiltonian circuit is $W[2]$ hard with respect to 
$z$. 
\end{theorem}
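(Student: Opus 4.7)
The plan is to reduce from $k$-\textsc{Dominating Set}, the canonical $W[2]$-hard problem when parameterized by the size $k$ of the desired dominating set. Given an undirected graph $H = (V_H, E_H)$ with $V_H = \{u_1, \ldots, u_n\}$ and parameter $k$, I would construct a digraph $G$ together with a bound $f(k)$ on its cycle rank, such that $G$ admits a Hamiltonian circuit if and only if $H$ has a dominating set of size at most $k$. This gives $W[2]$-hardness with respect to $z$ provided $f$ is a computable function, since Hamiltonicity is trivially in $\mathrm{NP}$ and the reduction is polynomial for each fixed $k$.

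The construction I would attempt uses $k$ \emph{selection gadgets} $S_1, \ldots, S_k$ arranged sequentially along a spine, followed by $n$ \emph{check gadgets} $C_1, \ldots, C_n$, one for each vertex of $H$. Each selection gadget $S_i$ has an input vertex $\mathit{in}_i$, an output vertex $\mathit{out}_i$, and choice-vertices $c_{i,1}, \ldots, c_{i,n}$ wired so that any Hamiltonian traversal of $S_i$ must route through exactly one $c_{i,\pi(i)}$, thereby ``selecting'' a vertex $u_{\pi(i)} \in V_H$ for the $i$-th slot of the candidate dominating set. Each check gadget $C_j$ is then designed so that the Hamiltonian cycle can exhaust its internal vertices only by using a feedback arc from some $c_{i,\pi(i)}$ with $u_{\pi(i)} \in N[u_j]$, thus forcing the $k$ selections to dominate $u_j$.

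The main technical difficulty is controlling the cycle rank. I would aim to design $G$ so that a small, identifiable set of $O(k)$ ``hub'' vertices, one extracted from each selection gadget, intersects every directed cycle of $G$. Removing these hubs collapses $G$ to an acyclic digraph, and, since this property should apply recursively within the strongly connected components obtained at each peeling step, the cycle rank stays bounded by $O(k)$. Concretely, the check gadgets should be built as almost-acyclic structures whose only directed cycles close through the selection spine, so that their SCC decomposition is essentially inherited from the selection hubs rather than being intrinsic to the check gadgets themselves.

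The hard part will be reconciling two opposing requirements in the check-gadget design: on the one hand, each $C_j$ must carry rich enough feedback arcs to encode the set-cover-like constraint that $u_j$ be dominated; on the other hand, every directed cycle created by such feedback must pass through the few hubs that witness low cycle rank. I anticipate that a clean solution mimics the classical $k$-\textsc{Dominating Set} to \textsc{Hamiltonian Cycle} reductions but replaces their widely branching ``row-by-row'' test structure with a linear, path-like cascade through the check gadgets, since cycle rank penalizes nested strongly connected components much more severely than, say, directed path-width does, and only a cascaded arrangement seems compatible with a uniform $O(k)$ hub cover.
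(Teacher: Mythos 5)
This is an imported result: the paper does not prove it, but cites it directly from Lampis, Kaouri and Mitsou \cite{LampisKaouriMitsou2011}, so there is no in-paper argument to compare against. Taken on its own terms, your proposal is a plan rather than a proof. The high-level strategy is reasonable: reducing from $k$-\textsc{Dominating Set} is the canonical route to $W[2]$-hardness, and your cycle-rank control idea is sound in principle, since if a set $S$ of $O(k)$ vertices meets every directed cycle then $r(G)\leq |S|$ (in each SCC with an edge there is a hub to peel, and the induction on $|S|$ goes through because cycle rank is monotone under subgraphs).

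The gap is that none of the load-bearing pieces are actually constructed or verified. You never instantiate the selection gadget, i.e.\ you do not exhibit the arc set that forces a Hamiltonian traversal of $S_i$ to pass through exactly one $c_{i,\pi(i)}$; you never instantiate the check gadget, i.e.\ you do not say which arcs encode the constraint $u_{\pi(i)}\in N[u_j]$ nor why a Hamiltonian cycle cannot exhaust $C_j$ by routing only along the spine; you never exhibit the $O(k)$ hub set nor verify that it is a feedback vertex set of the \emph{actual} construction; and you prove neither direction of the equivalence between a dominating set of size $k$ in $H$ and a Hamiltonian circuit in $G$. You explicitly label the two critical steps, the feedback-arc design and the cycle-rank control, as ``the main technical difficulty'' and ``the hard part,'' and close by speculating that a ``clean solution'' exists. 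That speculation is exactly the proof obligation, not a discharge of it, so as written the argument does not establish the theorem.
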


Since by Theorem \ref{theorem:DirectedPathWidthZigZagNumber} constant cycle rank is less expressive than constant zig-zag number, the hardness result stated 
in Theorem \ref{theorem:LampisKaouriMitsou} also works for zig-zag number. 
 Given a sequence of $2k$ not necessarily distinct vertices  $\sigma=(s_1,t_1,s_2,t_2,...,s_k,t_k)$, a 
$\sigma$-linkage is a set of internally disjoint directed paths $p_1,p_2,...,p_k$ where each $p_i$ connects $s_i$ to $t_i$.
%
%
%

\vspace{-1pt}
\begin{theorem}[Slivikins\cite{Slivkins2003}]
\label{theorem:Slivkins}
Given a $DAG$ $G$, determining whether $G$ has a $\sigma$-linkage $\sigma=(s_1,t_1,s_2,t_2,...,s_k,t_k)$
is hard for $W[1]$. 
\end{theorem}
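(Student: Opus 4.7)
The plan is to reduce from the $W[1]$-hard Multicolored Clique problem parameterized by clique size $k$: given a graph $H=(V,E)$ with vertex set partitioned into $k$ color classes $V_1,\ldots,V_k$, decide whether $H$ contains a clique with exactly one vertex from each class. From an instance $(H,V_1,\ldots,V_k)$ I would construct a DAG $G$ and a sequence $\sigma$ of $\binom{k}{2}$ source-sink pairs $(s_{ij},t_{ij})$ indexed by $1\leq i<j\leq k$, such that $G$ admits a $\sigma$-linkage if and only if $H$ has a multicolored $k$-clique. Since the number $\binom{k}{2}$ of pairs in $\sigma$ depends only on $k$, this gives an FPT-reduction transporting $W[1]$-hardness from Multicolored Clique to the $\sigma$-linkage problem on DAGs.

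The construction arranges gadgets in topological layers, so acyclicity is automatic. For each color class $i$ I would insert a \emph{selector gadget} containing $|V_i|$ parallel tracks, one per vertex $u\in V_i$. Each of the $k-1$ paths $p_{ij}$ (for $j\neq i$) is forced to traverse the color-$i$ selector, and a sequence of narrow bottlenecks inside the selector would ensure that these $k-1$ paths can be realized in a vertex-disjoint way only if they all traverse the same track, thereby ``committing'' color $i$ to a single vertex $u_i\in V_i$. Between the exit of gadget $i$'s track $u$ and the entry of gadget $j$'s track $v$ I would place an edge only when $\{u,v\}\in E$, so that each path $p_{ij}$ certifies that the chosen pair $(u_i,u_j)$ is an edge of $H$. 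Hence a $\sigma$-linkage exists if and only if the vertices $u_1,\ldots,u_k$ induce a multicolored clique in $H$.

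The main technical obstacle is the design of the bottlenecks inside the selector gadgets: a naive single-vertex cut would block the $k-1$ paths from coexisting, so the gadget must instead use a sequence of pairwise ``waists'' that collectively force all paths traversing the gadget to share a track, while still admitting $k-1$ internally disjoint sub-paths along any chosen track. Combined with a strict left-to-right orientation of every internal edge, such a design simultaneously guarantees acyclicity, polynomial size in $|V(H)|+|E(H)|$, and the desired correspondence between $\sigma$-linkages and multicolored cliques.
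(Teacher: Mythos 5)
The paper itself contains no proof of this statement: Theorem~\ref{theorem:Slivkins} is imported verbatim, with citation, from Slivkins~\cite{Slivkins2003}, so your attempt can only be measured against the literature. Your overall template---an FPT reduction from Multicolored Clique with a set of terminal pairs indexed by color pairs, a layered (hence automatically acyclic) construction, and inter-gadget edges present only for adjacent vertices---is the standard modern route to such hardness results, and it differs from Slivkins's original argument, which reduces from Clique via a grid-like construction for disjoint paths on DAGs. The parameter blow-up from $k$ to $\binom{k}{2}$ is harmless for an FPT reduction, so the skeleton is sound.

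The genuine gap is at the one place that carries the entire proof: the selector gadget. In a $\sigma$-linkage the paths are internally disjoint, and since your terminals are all distinct, the $k-1$ paths $p_{ij}$ ($j\neq i$) crossing the color-$i$ selector must be pairwise vertex-disjoint inside it; hence they cannot literally ``all traverse the same track,'' and replacing each track by a bundle of $k-1$ parallel subtracks does not by itself help, because nothing in your description prevents different paths from choosing subtracks belonging to different bundles---which would allow a no-instance of Multicolored Clique to admit a $\sigma$-linkage. You name this as the ``main technical obstacle'' but assert rather than exhibit a gadget resolving it, and the resolution is precisely the content of the theorem. The standard repair is to introduce, for each color class $i$, one or more dedicated selector paths (raising the parameter to $k+\binom{k}{2}$, still a function of $k$ only) whose routing occupies all rows of the color-$i$ strip except a contiguous block corresponding to a single vertex $u_i\in V_i$, thereby forcing every verification path $p_{ij}$ to cross that strip inside the chosen block, with adjacency checked at the crossing points; one must then prove both directions of the equivalence, including that the selector paths themselves can always be routed consistently in the yes-case. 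Until such a gadget is specified and its forcing property proved, the claimed correspondence between $\sigma$-linkages and multicolored cliques is not established.
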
 
\vspace{-1pt}
Thus, since a $\sigma$-linkage is clearly the union of $k$-paths, Theorem \ref{theorem:Slivkins} implies that 
the dependence of $k$ on the exponent is necessary even if $z$ is fixed to be $1$.

%
%
%
\vspace{-2pt}
\section{Zig-Zag Number versus Other Digraph Width Measures}
\label{section:zigzagVersusDirectedWidth}
\vspace{-2pt}
Cops-and-robber games provide an intuitive way to define several of the 
directed width measures cited in Section \ref{section:Comparison}.
Let $G$ be a digraph. A cops-and-robber game on $G$ is played by 
two parties. One is controlling a set of 
$k$ cops and the other is controlling a robber. At each round of the game 
the cop either stands on a vertex of $G$ or flies in an helicopter, meaning that it is temporarily removed from the 
digraph. The robber stands on a vertex of $G$, and can at any time 
run at great speed along a cop-free directed path to another vertex. 
The objective of the cops is to capture the robber by landing 
on a vertex currently occupied by him, while the objective of the robber is 
to avoid capture. 
%

Let $\mu$ be the minimum number of cops needed to capture the robber in a digraph $G$.
The {\em directed tree-width} of $G$ ($dtw(G)$) is equal to $\mu$ if at each round the 
robber only moves to a vertex within the strongly connected component (SCC) induced by the vertices that 
are not blocked by cops \cite{JohnsonRobertsonSeymourThomas2001}. The {\em $D$-width} of $G$ ($Dw(G)$)
is equal to $\mu$ if the cops capture the robber according to a monotone strategy, i.e., 
cops never revisit vertices that were previously occupied by some cop. The robber is also required to move within the 
SCC induced by the non-blocked vertices\cite{Gruber2012}. The {\em DAG-width} of $G$ ($dagw(G)$)
is equal to $\mu$ when the cops follow a monotone strategy but when the robber can move along 
arbitrary cop free paths, independently of whether it stays within the SCC component 
induced by the non-blocked vertices \cite{BerwangerDawarHunterKreutzerObdrzalek2012}. 
The {\em directed path width} of $G$ ($dpw(G)$) is equal to the quantity $\mu-1$ if we add the additional 
complication that the cops cannot see the robber \cite{Barat2006}.
The {\em Kelly-width} of $G$ ($kellyw(G)$)  is equal to $\mu$ if 
the cops cannot see the robber and if at each step the robber only moves when a cop is about to land in his current position\cite{GanianHlinenyKneisLangerObdrzRossmanith2009}.
Finally, the {\em DAG-depth} of $G$ ($ddp(G)$), which is the directed analog of the tree-depth 
defined in \cite{NesetrilDeMendez2006}, is the minimum number of cops needed to capture the robber when the cops follow a 
lift-free strategy, i.e., the cop player never moves a cop from a vertex once it has landed \cite{GanianHlinenyKneisLangerObdrzRossmanith2009}.

Some other width measures are better defined via some structural property. For 
instance, the {\em K-width} of $G$ ($Kw(G)$) is the maximum number of different simple paths 
between any two vertices of $G$ \cite{GanianHlinenyKneisLangerObdrzRossmanith2009}. 
The weak separator number of $G$ is defined as follows: If $G=(V,E)$ is a digraph and 
$U\subseteq V$, then a weak balanced separator for $U$ is a set $S$ such that every 
SCC of $G[U\backslash S]$ contains at most $\frac{1}{2}|U|$ vertices. The weak separator number of 
$G$, denoted by $s(G)$ is defined as the maximum size taken over all subsets $U\subseteq V$, among the minimum weak balanced
separators of $U$. Finally, the {\em cycle rank} of a digraph $G=(V,E)$ denoted by $r(G)$ is
inductively defined as follows: If $G$ is acyclic, then $r(G)=0$. If $G$ is strongly connected and $E\neq \emptyset$, 
then $r(G) = 1 + \min_{v\in V}\{r(G-v)\}$. If $G$ is not strongly connected then $r(G)$ equals the maximum cycle 
rank among all strongly connected components of $G$.
Below we find a summary of the relations between the zig-zag number of a digraph and all the digraph width measures 
listed above. We write $A \precnsim B$ to indicate that there are graphs of constant width with respect 
to the measure $A$ but unbounded width with respect to the measure $B$. 
We write $A\preceq B$ to express that $A$ is not asymptotically greater than $B$. 

\begin{equation}
\label{equation:zigzagPathWidth}
zn(G)\precnsim dpw(G) \stackrel{\mbox{\tiny{\cite{Gruber2012}}}}{\precnsim} cr(G) \stackrel{\mbox{\tiny{\cite{GanianHlinenyKneisMeisterObdrzalekRossmanithSikdar2010}}}}{\precnsim} \left\{\begin{array}{l} Kw(G) \\ ddp(G) \end{array} \right. \hspace{1cm}\frac{cr(G)}{\log n} \stackrel{\mbox{\tiny{\cite{Gruber2012}}}}{\preceq} s(G)  
\end{equation}

\begin{equation}
\label{equation:zigzagDwidth}
\frac{zn(G)}{\log n} \preceq  s(G) \stackrel{\mbox{\tiny{\cite{Gruber2012}}}}{\preceq} Dw(G) \stackrel{\mbox{\tiny{\cite{Gruber2012}}}}{\preceq}  dagw(G) 
\stackrel{\mbox{\tiny{\cite{BerwangerDawarHunterKreutzerObdrzalek2012}}}}{\preceq} dpw(G)
\end{equation}

\begin{equation}
\label{equation:zigzagTreeWidth}
\sqrt{\frac{zn(G)}{\log n}}\preceq dtw(G) \stackrel{\mbox{\tiny{\cite{HunterKreutzer2008}}}}{\preceq} kellyw(G) \hspace{1cm} \sqrt{Dw(G)} \stackrel{\mbox{\tiny{\cite{EvansHunterSafari2007}}}}{\preceq} dtw(G) 
\stackrel{\mbox{\tiny{\cite{EvansHunterSafari2007}}}}{\preceq} Dw(G)
\end{equation}

The numbers above $\precnsim$ and $\preceq$ point to the references in which these relations where established.
The only new relations are $zn(G)\precnsim dpw(G)$, $zn(G)/\log n \preceq  s(G)$ and $\sqrt{zn(G)/\log n}\preceq dtw(G)$  
for which we provide a justification in the appendix.

\vspace{-10pt}
\section{Regular Slice Languages}
\label{section:RegularSliceLanguages}
\vspace{-5pt}

A slice $\boldS=(V,E,l,s,t,o)$ is a digraph comprising a set of vertices $V$, a set of edges $E$,
a vertex labeling function $l:V\rightarrow \labelAlphabet$ for some set of symbols $\Gamma$, and 
functions $s,t:E\rightarrow V$ which respectively associate to each edge $e\in E$,
a source vertex $e^s$ and a tail vertex $e^t$. We notice that an edge might possibly have the same source and tail ($e^s=e^t$). 
The vertex set $V$ is partitioned into three disjoint 
subsets: an in-frontier $I\subseteq V$ a center $C\subseteq V$ and an out-frontier $O\subseteq V$. 
Additionally, we require that each frontier-vertex in $I\cup O$ is the endpoint of exactly one edge in $E$ 
and that no edge in $E$ has both endpoints in the same frontier. The function $o:E\rightarrow \{-1,1\}$ 
is an orienting function with the restriction that $o(e)=1$ if $e^t\in O$, and 
$o(e)=-1$ if $e^t\in I$. Intuitively $o$ assigns $1$ to an edge if it is oriented towards
the out frontier and $-1$ if it is oriented towards the in-frontier. 
The frontier vertices in $I\cup O$ are labeled by $l$ with numbers from the set $\{1,...,q\}$ 
for some natural number $q\geq \max\{|I|,|O|\}$ in such a way that no two vertices in the same frontier receive the same number.
Vertices belonging to different frontiers may on the other hand be labeled with the same number. The 
center vertices in $C$ are labeled by $l$ with elements from $\Gamma\backslash \{1,...,q\}$. We say that 
a slice $\boldS$ is normalized if $l(I)=\{1,...,|I|\}$ and $l(O)=\{1,...,|O|\}$. Non-normalized slices
will play an important role in Section \ref{section:MainTheorem}. Since we will deal with weighted
graphs, we will also allow the edges of a slice to be weighted by a function $w:E\rightarrow \Omega$
where $\Omega$ is a finite commutative semigroup. 

A slice $\boldS_1$ with frontiers $(I_1,O_1)$ can be glued to a slice $\boldS_2$ with frontiers $(I_2,O_2)$
provided $l_1(O_1)=l_2(I_2)$ and provided that $o_1(e_1)=o_2(e_2)$ whenever $e_1^t\in O_1, e_2^s\in I_2$ and 
$l_1(e_1^t)=l_2(e_2^s)$. In that case the glueing gives rise to the slice $\boldS_1\circ \boldS_2$ with 
frontiers $(I_1,O_2)$ which is obtained by fusing each such a pair of edges $e_1,e_2$ into a single edge $e$
whose orientation is coherent with the orientation of $e_1$ and $e_2$\footnote{By coherent we mean $e^s=e_1^s$ and $e^t=e_2^t$ if $o(e_1)=1$ 
and $e^s=e_2^s$ and $e^t=e_1^t$ if $o(e_1)=-1$.} (Figure \ref{figure:BasicSlices}.$i$).
We observe that in the glueing process the frontier vertices disappear.
If $\boldS_1$ and $\boldS_2$ are weighted by functions $w_1$ and $w_2$, then
we add the requirement that the glueing of $\boldS_1$ with $\boldS_2$ can be 
performed if the weights of the edges touching the out-frontier of $\boldS_1$ 
agree with the weights of their corresponding edges touching the in-frontier of $\boldS_2$. 
On the other hand, any slice can be decomposed into a sequence of atomic parts which we call {\em unit slices}, namely, slices 
with at most one vertex on its center (Figure \ref{figure:BasicSlices}.$i$).
Thus slices may be regarded as a graph theoretic analog of 
the knot theoretic braids \cite{Artin1947}, in which twists are replaced by vertices.
Within automata theory, slices may be related to a vast number of formalisms such 
as graph automata \cite{Thomas1992,BrandenburgSkodinis2005}, 
graph rewriting systems
\cite{Courcelle1987,BauderonCourcelle1987,EngelfrietVereijken1997}, and others
\cite{GiammarresiRestivo1996,GiammarresiRestivo1992,BozapalidisKalampakas2006,BorieParkerTovey1991}. In 
particular, slices may be regarded as a specialized version of the multi-pointed
graphs defined in \cite{EngelfrietVereijken1997} but subject to a slightly different
composition operation. 


The width of a slice $\boldS$ with frontiers $(I,O)$ is defined as $w(\boldS)=\max\{|I|,|O|\}$.
In the same way that letters from an alphabet may be concatenated by automata to form infinite 
languages of strings, we may use automata or regular expressions over alphabets
of slices of a bounded width to define infinite families of digraphs. Let $\slicealphabet^{c,q}$ denote 
the set of all unit slices of width at most $c$ and whose frontier vertices are numbered with numbers from $\{1,...,q\}$ for 
$q\geq c$. We say that a slice is {\em initial} if its in-frontier is empty and {\em
final} if its out-frontier is empty. A slice with empty center is called a {\em permutation 
slice}. Due to the restriction that each frontier vertex of a slice must be connected to 
precisely one edge, we have that each vertex in the in-frontier of a permutation slice is necessarily connected 
to a unique vertex in its out-frontier. The empty slice, denoted by $\emptyslice$, is the slice with empty center 
and empty frontiers. We regard the empty slice as a permutation slice. A subset $\lang$ of the free monoid
$(\slicealphabet^{c,q})^*$ generated by $\slicealphabet^{c,q}$ is a slice language if 
for every sequence of slices $\boldS_1\boldS_2...\boldS_n \in \lang$ we have 
that $\boldS_1$ is an initial slice, $\boldS_n$ a final slice and 
$\boldS_i$ can be glued to $\boldS_{i+1}$ for each $i\in \{1,...,n-1\}$.
We should notice that at this point the operation of the monoid in consideration is just the concatenation 
$\boldS_1\boldS_2$ of slice symbols $\boldS_1$ and $\boldS_2$ and should not be confused with the composition $\boldS_1 \circ \boldS_2$ 
of slices. Also the unit of the monoid is just the empty symbol $\lambda$ and not the empty 
slice, thus the elements of $\lang$ are simply sequences of slices, regarded as dumb letters. 
To each slice language
$\lang$ over $\slicealphabet^{c,q}$ we associate a graph language $\lang_{\graph}$ consisting 
of all digraphs that are obtained by composing the elements of the strings in $\lang$: 

\begin{equation}
\label{equation:SliceLanguage}
\lang_{\graph} = \{\boldS_1\circ\boldS_2...\circ \boldS_n| \boldS_1\boldS_2...\boldS_n \in \lang\}
\end{equation}

However we observe that a set $\lang_{\graph}$ of digraphs may be represented by several different slice 
languages, since a digraph in $\lang_{\graph}$ may be decomposed in several ways as a string of unit slices. 
We will use the term {\em unit decomposition} of a digraph $H$ to denote any sequence of unit slices
$\boldU=\boldS_1\boldS_2...\boldS_n$ whose composition $\boldS_1\circ\boldS_2\circ...\circ\boldS_n$ yields $H$.
We say that the unit decomposition $\boldU$ is {\em dilated} if it contains permutation slices, including possibly the 
empty slice (Figure \ref{figure:Subslicings}.$iii$). The slice-width of $\boldU$ is the minimal $c$ for which $\boldU\in (\slicealphabet^{c,q})^*$
for some $q$. In other words, the slice width of a unit decomposition is the width of the widest slice appearing in it.  

A slice language is regular if it is generated by a finite automaton or regular
expressions over slices. We notice that since any slice language is a subset of the free 
monoid generated by a slice alphabet $\slicealphabet^{c,q}$, we do not need to make
a distinction between regular and rational slice languages.
Therefore, by Kleene's theorem, every slice language generated by a regular expression can be also generated by a
finite automaton. Equivalently, a slice language is regular if and only if it
can be generated by the slice graphs defined below
\cite{deOliveiraOliveira2010}:

\begin{definition}[Slice Graph] A {\em slice graph} over a slice alphabet
$\slicealphabet^{c,q}$ is a labeled directed graph
$\slicegraph=(\mathcal{V},\mathcal{E},\mathcal{S},\mathcal{I},\mathcal{T})$ 
possibly containing loops but without multiple edges where $\mathcal{I}\subseteq \mathcal{V}$
is a set of initial vertices, $\mathcal{T}\subseteq \mathcal{V}$ a set of final vertices and the
function $\mathcal{S}:\mathcal{V}\rightarrow \slicealphabet^{c,q}$ satisfies the following conditions:
\begin{itemize}
	\item $\mathcal{S}(\slicegraphvertex)$ is a initial slice for every vertex $\slicegraphvertex$ in $\mathcal{I}$, 
	\item $\mathcal{S}(\slicegraphvertex)$ is final slice for every vertex $\slicegraphvertex$ in $\mathcal{T}$ and,
	\item $(\slicegraphvertex_1,\slicegraphvertex_2) \in \mathcal{E}$ implies that $\mathcal{S}(\slicegraphvertex_1)$ can be glued to $\mathcal{S}(\slicegraphvertex_2)$. 
\end{itemize}
\end{definition}

We say that a slice graph is {\em deterministic} if none of its vertices has two forward neighbors labeled with the same slice and if there
is no two initial vertices labeled with the same slice. In other words, in a deterministic slice graph no two distinct walks are labeled with the same sequence of slices.  
We denote by $\lang(\slicegraph)$ the slice language generated by $\slicegraph$, 
which we define as the set of all sequences slices $\mathcal{S}(\slicegraphvertex_1)\mathcal{S}(\slicegraphvertex_2)\cdots\mathcal{S}(\slicegraphvertex_n)$ 
where $\slicegraphvertex_1\slicegraphvertex_2\cdots \slicegraphvertex_n$ is a walk on $\slicegraph$ from an initial vertex to a final vertex.
We write $\lang_{\graph}(\slicegraph)$ for the language of digraphs derived from $\lang(\slicegraph)$. 

\section{Counting Subgraphs Specified by a Slice Language}
\label{section:MainTheorem}

A sub-slice of a slice $\boldS$ is a subgraph of $\boldS$ that is itself a slice. If $\boldS'$ is a 
sub-slice of $\boldS$ then we consider that the numbering in the frontiers of $\boldS'$ are inherited from the numbering 
of the frontiers of $\boldS$. Therefore, even if $\boldS$ is normalized, its sub-slices might not be.  If $\boldU = \boldS_1\boldS_2...\boldS_n$ is a unit decomposition 
of a digraph $G$, then a sub-unit-decomposition of $\boldU$ is a unit decomposition $\boldU'=\boldS_1'\boldS_2'...\boldS_n'$
of a subgraph $H$ of $G$ such that $\boldS_i'$ is a sub-slice of $\boldS_i$ for $1\leq i \leq n$. We observe that 
sub-unit-decompositions may be padded with empty slices. A unit decomposition $\boldU=\boldS_1\boldS_2...\boldS_n$
may have exponentially many sub-unit-decompositions of a given slice-width $c$. However, as we will state in Lemma \ref{lemma:SliceGraphSubgraphs}
the set of all such sub-unit decompositions of $\boldU$ may\! be represented\! by a slice\! graph of 
size polynomial in $n$. A normalized unit decomposition is a unit decomposition $\boldU=\boldS_1\boldS_2...\boldS_n$ 
such that $\boldS_i$ is a normalized slice for each $i\in \{1,...,n\}$. A slice language is normalized if all 
unit decompositions in it are normalized. A slice-graph is normalized  if all slices labeling its vertices 
are normalized. We notice that a regular slice language is normalized if and only if it is generated by 
a normalized slice graph.

\begin{lemma}
\label{lemma:SliceGraphSubgraphs}
Let $G$ be a digraph with $n$ vertices, $\unitdecomposition=\boldS_1\boldS_2...\boldS_n$ be a normalized unit 
decomposition of $G$ of slice-width $\widthG$, and let $c\in \N$ be such that $c\leq \widthG$. Then one 
can construct in time $n\cdot\widthG^{O(c)}$ an acyclic and deterministic slice graph 
$\subgraphsSlicegraph^c(\unitdecomposition)$ on $n\cdot \widthG^{O(c)}$ vertices whose slice language
$\lang(\subgraphsSlicegraph^c(\boldU))$ consists of all sub-unit-decompositions of $\boldU$ of slice-width at most $c$. 
\end{lemma}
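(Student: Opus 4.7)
The plan is to construct $\subgraphsSlicegraph^c(\boldU)$ as a layered DAG with $n$ layers, where layer $i$ enumerates all width-$\leq c$ sub-slices of the unit slice $\boldS_i$, and edges between consecutive layers enforce frontier compatibility.

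First, I would enumerate sub-slices layer by layer. For each $i \in \{1, \ldots, n\}$, a width-$\leq c$ sub-slice $\boldS_i'$ of $\boldS_i$ is specified by choosing a subset $F_{in}$ of $\boldS_i$'s in-frontier and a subset $F_{out}$ of its out-frontier with $|F_{in}|, |F_{out}| \leq c$, a decision on whether to retain the (at most one) center vertex, and a consistent selection of center-incident edges. Because each retained frontier vertex must keep its unique incident edge---and if that edge reaches the center, the center must also be retained---the number of valid choices at layer $i$ is at most $\binom{q}{\leq c}^2 \cdot 2^{O(1)} = q^{O(c)}$. Crucially, each $\boldS_i'$ inherits its frontier numbering from $\boldS_i$ and is therefore generally non-normalized, which is necessary for the downstream gluing to reproduce the correct induced subgraph. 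Note that the empty slice and arbitrary permutation sub-slices are included automatically in this enumeration, which accounts for padding.

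Next, I would assemble the slice graph. Introduce a vertex $(i, \boldS_i')$ for every enumerated sub-slice, and let $\mathcal{S}$ label it by the slice $\boldS_i'$. Add an edge from $(i, \boldS_i')$ to $(i+1, \boldS_{i+1}')$ whenever the (numbered) out-frontier of $\boldS_i'$ coincides with the in-frontier of $\boldS_{i+1}'$ with matching orientations; these data are coherent because $\boldU$'s own gluing already aligns the out-frontier of $\boldS_i$ with the in-frontier of $\boldS_{i+1}$. Designate as initial those $(1, \boldS_1')$ with empty in-frontier and as final those $(n, \boldS_n')$ with empty out-frontier.

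Verification of the declared properties is then routine. Acyclicity is immediate since edges always advance the layer index. Determinism holds because all forward neighbors of $(i, \boldS_i')$ lie in layer $i+1$, each sub-slice of $\boldS_{i+1}$ appears as the label of exactly one layer-$(i+1)$ vertex, and the same argument applies to the initial vertices. The size bound $n \cdot q^{O(c)}$ follows directly from the per-layer count, and enumeration plus pairwise gluing checks across consecutive layers run in time $n \cdot q^{O(c)}$. Correctness of the generated slice language is a direct bijection: every walk from an initial to a final vertex spells out a sub-unit-decomposition $\boldS_1' \ldots \boldS_n'$ of $\boldU$ of slice-width $\leq c$, and conversely every such sub-unit-decomposition traces a unique walk since each $\boldS_i'$ occurs in exactly one layer-$i$ vertex.

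The main obstacle is the bookkeeping for inherited frontier numbering: unlike normalized slices, sub-slices must carry labels drawn from $\{1, \ldots, q\}$ rather than $\{1, \ldots, c\}$ so that gluing of consecutive sub-slices reproduces the correct induced subgraph of $G$. This is precisely what forces the $\binom{q}{\leq c}^2$ factor---and hence the $q^{O(c)}$ dependence on $q$---rather than the purely $c$-dependent factor that normalization would yield.
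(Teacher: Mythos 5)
Your proposal matches the paper's own proof essentially step for step: the layered construction indexed by position in $\boldU$, the enumeration of at most $\widthG^{O(c)}$ sub-slices per layer with inherited (non-normalized) frontier numbering, gluing-compatibility edges only between consecutive layers, initial/final vertices given by empty in/out frontiers, and the resulting acyclicity, determinism (distinct inherited numberings $\Rightarrow$ distinct slice labels within a layer), size, and time bounds. The bijection argument for the generated slice language is also the same, so this is the paper's proof rather than an alternative one.
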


%

Let  $\omega=(v_1,v_2,...,v_n)$ be a linear ordering of the vertices of a digraph $H$.
We say that a dilated unit decomposition $\boldU = \boldS_1\boldS_2...\boldS_m$ of $H$ is compatible with $\omega$ 
if $v_i$ is the center vertex of $\boldS_{j_i}$ for each $i\in \{1,...,n\}$ and if $j_i > j_{i-1}$ for 
each $i\in \{1,...,n-1\}$ (observe that we need to use the subindex $j_i$ instead of simply $i$ because $\boldU$ is dilated and 
therefore some slices in $\boldU$ have no center vertex). Notice that 
for each ordering $\omega$ there might exist several unit decompositions of $H$ that are compatible to $\omega$.  
If $\omega$ is a $z$-topological ordering of a digraph $G$ and if $\boldU$ is a dilated unit decomposition 
of $G$ that is compatible with $\omega$, then we say that $\boldU$ has zig-zag number $z$. The zig-zag number 
of a slice language $\lang$ is the maximal zig-zag number of a unit decomposition in $\lang$. 
If a dilated unit decomposition $\boldU$ has zig-zag number $z$ then any of its sub-unit decompositions has zig-zag number at 
most $z$ (Proposition \ref{proposition:SliceWidthFacts}). Thus the zig-zag number
of $\lang(\subgraphsSlicegraph^c(\boldU))$ is at most $z$. 

\begin{proposition}
\label{proposition:SubUnitDecompositionZigZagNumber}
Let $\boldU$ be a unit decomposition of zig-zag number $z$. Then any sub-unit-decomposition in $\lang(\subgraphsSlicegraph^c(\boldU))$ 
has zig-zag number at most $z$. 
\end{proposition}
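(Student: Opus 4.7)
The plan is to show that the linear ordering $\omega$ of the vertices of $G$ (which realizes the zig-zag number of $\boldU$) induces, by restriction, a $z$-topological ordering $\omega'$ on any subgraph $H$ produced by a sub-unit-decomposition, and that the sub-unit-decomposition is compatible with $\omega'$. The zig-zag bound for $\boldU'$ then follows immediately from the definition.

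First I would unpack the hypotheses. By definition, $\boldU=\boldS_1\cdots\boldS_n$ having zig-zag number $z$ means there is a $z$-topological ordering $\omega=(v_1,\ldots,v_n)$ of the underlying digraph $G$ with which $\boldU$ is compatible, i.e.\ the center vertex of each non-permutation $\boldS_{j_i}$ is $v_i$, and the indices $j_i$ are strictly increasing with $i$. Now let $\boldU'=\boldS_1'\cdots\boldS_n'$ be any sub-unit-decomposition in $\lang(\subgraphsSlicegraph^c(\boldU))$; it decomposes some subgraph $H$ of $G$ whose vertex set is $\{v_{i_1},\ldots,v_{i_m}\}$ with $i_1<i_2<\cdots<i_m$. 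Set $\omega'=(v_{i_1},\ldots,v_{i_m})$. Because each $\boldS_k'$ is a sub-slice of $\boldS_k$, its center (if non-empty) is the center of $\boldS_k$, and the surviving centers appear in the same order as in $\boldU$; consequently $\boldU'$ is compatible with $\omega'$.

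Next I would show that $\omega'$ is a $z$-topological ordering of $H$. Let $p=(V_p,E_p)$ be a directed simple path in $H$; since $H\subseteq G$, $p$ is also a directed simple path in $G$. For any cut position $j\in\{1,\ldots,m-1\}$ of $\omega'$, an edge $e\in E_p$ is cut precisely when its endpoints lie on opposite sides of the partition $(\{v_{i_1},\ldots,v_{i_j}\},\{v_{i_{j+1}},\ldots,v_{i_m}\})$. Since every endpoint of every edge of $p$ lies in $V(H)$, this holds if and only if $e$ is cut by $\omega$ at position $i_j$, i.e.\ by $(\{v_1,\ldots,v_{i_j}\},\{v_{i_j+1},\ldots,v_n\})$. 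Therefore the number of edges of $p$ crossing the $j$-th cut of $\omega'$ equals the number crossing the $i_j$-th cut of $\omega$, which is bounded by $z$ because $\omega$ is a $z$-topological ordering of $G$.

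Combining the two observations, $\omega'$ is a $z$-topological ordering of $H$ that is compatible with $\boldU'$, so $\boldU'$ has zig-zag number at most $z$, as required. No real obstacle arises here; the only point to handle carefully is the bookkeeping between cuts of $\omega'$ and cuts of $\omega$, which is clean because $H$ is a \emph{subgraph} of $G$ (so no edge of $p$ escapes $V(H)$) and because sub-unit-decompositions preserve the relative order of centers.
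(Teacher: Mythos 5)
Your proof is correct and follows essentially the same route the paper uses (it reproduces the content of Proposition~\ref{proposition:SliceWidthFacts}, items~1a--1b): restrict $\omega$ to the vertices of $H$, note that any directed simple path in $H$ is one in $G$, and observe that the cut of $\omega'$ at position $j$ meets the same edges of $p$ as the cut of $\omega$ at position $i_j$. You additionally spell out the compatibility of $\boldU'$ with $\omega'$ via the preservation of center vertices and their order under taking sub-slices, a detail the paper leaves implicit.
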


A slice language $\lang$ is $z$-dilated-saturated, if $\lang$ has zig-zag number at most $z$ and if for every digraph $H\in \lang_{\graph}$, every $z$-topological ordering
$\omega$ of $H$ and every dilated unit decomposition $\boldU$ of $H$ that is compatible with $\omega$ we have that $\boldU \in \lang$. 
We should {\bf emphasize} that the intersection of the graph languages generated by two slice graphs 
is not in general reflected by the intersection of their slice languages.
Indeed, it is easy to define slice languages $\lang,\lang'$ for which $\lang_{\graph}=\lang_{\graph}'$ but for which $\lang \cap \lang'=\emptyset$!
Additionally, a reduction from the Post correspondence problem \cite{Post1946} established 
by us in \cite{deOliveiraOliveira2010} implies that even determining whether 
the intersection of the graph languages generated by slice languages is 
empty, is undecidable. However this is not an issue if at least one of the 
intersecting languages is $z$-dilated-saturated, as stated in the next proposition.  

\begin{proposition}
\label{proposition:Intersection}
Let $\lang$ and $\lang'$ be two slice languages over $\slicealphabet^{c,q}$, such that 
$\lang$ has zig-zag number $z$ and such that $\lang'$ is $z$-saturated. 
If we let $\lang^{\cap}=\lang\cap \lang'$, then $\lang^{\cap}_{\graph} = \lang_{\graph}\cap \lang'_{\graph}$. 
\end{proposition}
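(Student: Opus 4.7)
My plan is to prove the two inclusions of the claimed equality separately, with the forward inclusion being essentially immediate and the reverse inclusion requiring the saturation hypothesis in an essential way.

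The inclusion $\lang^{\cap}_{\graph} \subseteq \lang_{\graph} \cap \lang'_{\graph}$ is routine: any sequence $\boldU = \boldS_1\boldS_2\cdots\boldS_n \in \lang\cap\lang'$ yields, by composition, a digraph $H=\boldS_1\circ\boldS_2\circ\cdots\circ\boldS_n$ that lies in both $\lang_{\graph}$ and $\lang'_{\graph}$ by definition of the graph language of a slice language. This follows directly from equation~(\ref{equation:SliceLanguage}) and does not need the saturation hypothesis.

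For the reverse inclusion, I take an arbitrary digraph $H \in \lang_{\graph}\cap\lang'_{\graph}$ and aim to exhibit a single unit decomposition of $H$ that witnesses membership in both $\lang$ and $\lang'$. Since $H\in\lang_{\graph}$, there exists some dilated unit decomposition $\boldU = \boldS_1\boldS_2\cdots\boldS_m \in \lang$ whose composition equals $H$. Because $\lang$ has zig-zag number at most $z$, the decomposition $\boldU$ itself has zig-zag number at most $z$, which by the definition of zig-zag number for unit decompositions means precisely that $\boldU$ is compatible with some $z$-topological ordering $\omega$ of $H$ (namely, the ordering obtained by reading off the center vertices of the slices $\boldS_1,\ldots,\boldS_m$ in order, skipping those with empty centers). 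Now I invoke the saturation hypothesis on $\lang'$: since $H\in\lang'_{\graph}$, $\omega$ is a $z$-topological ordering of $H$, and $\boldU$ is a dilated unit decomposition of $H$ compatible with $\omega$, the definition of $z$-dilated-saturation forces $\boldU \in \lang'$. Hence $\boldU \in \lang\cap\lang' = \lang^{\cap}$, and so $H \in \lang^{\cap}_{\graph}$.

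The conceptual content of the proof lies entirely in the reverse inclusion, where the saturation of $\lang'$ is exactly the tool that bridges the gap between ``$H$ has some decomposition in $\lang'$'' and ``the specific decomposition we produced from $\lang$ is also in $\lang'$''. The only subtle point to verify carefully is that the ordering $\omega$ extracted from $\boldU$ is genuinely a $z$-topological ordering of $H$, which I expect to follow from the definition of zig-zag number of a dilated unit decomposition; this is where the hypothesis that $\lang$ has zig-zag number $z$ (not merely some larger value) is used. Once this bookkeeping is in place, the rest of the argument is a direct application of the definitions, and no combinatorial machinery beyond what is already developed in Section~\ref{section:RegularSliceLanguages} is required.
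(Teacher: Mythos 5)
Your proposal is correct and follows essentially the same route as the paper's own proof: the forward inclusion is the same routine observation, and for the reverse inclusion you, like the paper, take a decomposition $\boldU\in\lang$ of $H$, use the zig-zag bound on $\lang$ to see $\boldU$ has zig-zag number at most $z$ (i.e., is compatible with a $z$-topological ordering of $H$), and then invoke $z$-saturation of $\lang'$ together with $H\in\lang'_{\graph}$ to conclude $\boldU\in\lang'$. The extra care you take in extracting the ordering $\omega$ from $\boldU$ is just an unpacking of the same definition the paper uses implicitly.
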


If $\boldS$ is a normalized slice in $\slicealphabet^{c,q}$ with in-frontier $I$ and out-frontier $O$ then a {\em $q$-numbering}
of $\boldS$ is a pair of functions  $in:I\rightarrow \{1,...,q\}$, $out:O\rightarrow\{1,...,q\}$ such that
for each two vertices $v,v'\in I$, $l(v)<l(v')$ implies that $in(l(v))<in(l(v'))$ and, 
for each two vertices $v,v'\in O$, $l(v)<l(v')$ implies that $out(l(v))<out(l(v'))$. We let $(\boldS,in,out)$ 
denote the slice obtained from $\boldS$ by renumbering each frontier vertex $v\in I$ with $in(l(v))$ and 
each out frontier vertex $v\in O$ with the $out(l(v))$. The $q$-numbering-expansion of a normalized slice\! $\boldS$ 
is\! the set\! $\numberingExtension(\boldS)$\! of all\! $q$-numberings\! of\! $\boldS$.

Let $\slicegraph = (\mathcal{V},\mathcal{E},\mathcal{S},\mathcal{I},\mathcal{T})$ be a slice graph 
over $\slicealphabet^{c,q}$. Then the $\widthG$-numbering expansion of $\slicegraph$ is the slice graph
$\numberingExtension^{\widthG}(\slicegraph)= (\mathcal{V}',\mathcal{E}',\mathcal{S}',\mathcal{I}',\mathcal{T}')$ defined as follows. 
For each vertex $\slicegraphvertex \in \mathcal{V}$  and each slice $(\boldS(\slicegraphvertex),in,out) \in \numberingExtension^{\widthG}(\mathcal{S}(\slicegraphvertex))$
we create a vertex $\slicegraphvertex_{in,out}$ in $\mathcal{V}'$ and label it with $(\boldS,in,out)$. 
Subsequently we connect $\slicegraphvertex_{in,out}$ to $\slicegraphvertex'_{in',out'}$ if there was an edge $(\slicegraphvertex,\slicegraphvertex')\in \mathcal{E}$
and if $(\boldS,in,out)$ can be glued to $(\boldS',in',out')$. 

\begin{theorem}
\label{theorem:SliceLanguageSubgraphs}
Let $G$ be digraph, $\unitdecomposition = \boldS_1\boldS_2 ... \boldS_n$ be a normalized unit decomposition 
of $G$ of slice-width $\widthG$ and zig-zag number $z$, $\slicegraph$ be a normalized $z$-dilated-saturated slice graph over $\slicealphabet^{c,q}$ 
and $\numberingExtension^{\widthG}(\slicegraph)$ be the $\widthG$-numbering expansion of $\slicegraph$. 
Then the set of all sub-unit-decompositions of $\boldU$ of slice-width at most $c$ 
whose composition yields a graph isomorphic to some graph in $\lang_{\graph}(\slicegraph)$ is represented by the regular slice language
$\lang(\subgraphsSlicegraph^c(\boldU)) \cap \lang(\numberingExtension^{\widthG}(\slicegraph))$. 
\end{theorem}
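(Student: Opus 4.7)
The plan is to prove the theorem by establishing a double inclusion between the set of sub-unit-decompositions described in the statement and the intersection $\lang^{\cap} = \lang(\subgraphsSlicegraph^c(\boldU)) \cap \lang(\numberingExtension^{\widthG}(\slicegraph))$.

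First I would handle the forward direction. Let $\boldU' = \boldS_1'\boldS_2'\cdots\boldS_n'$ be a sub-unit-decomposition of $\boldU$ of slice-width at most $c$ whose composition yields a digraph $H$ isomorphic to some element of $\lang_{\graph}(\slicegraph)$. By Lemma \ref{lemma:SliceGraphSubgraphs} we immediately obtain $\boldU'\in\lang(\subgraphsSlicegraph^c(\boldU))$. For the second membership, note that the frontier labels of each $\boldS_i'$ are inherited from $\boldU$ and therefore come from $\{1,\ldots,\widthG\}$. Let $\boldU''$ be the normalized counterpart of $\boldU'$ obtained by relabeling each frontier to $\{1,\ldots,|I_i|\}$ or $\{1,\ldots,|O_i|\}$ while preserving the relative order of the original labels. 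Then $\boldU''$ is a normalized dilated unit decomposition of $H$ compatible with the linear ordering $\omega'$ of $V(H)$ induced by the center vertices of $\boldU'$. Because $\omega'$ is the restriction to $V(H)$ of the $z$-topological ordering of $G$ associated with $\boldU$, and because by Proposition \ref{proposition:SubUnitDecompositionZigZagNumber} zig-zag number does not grow under sub-unit-decomposition, $\omega'$ is a $z$-topological ordering of $H$. Applying the assumption that $\slicegraph$ is $z$-dilated-saturated to the triple $(H,\omega',\boldU'')$ yields $\boldU''\in\lang(\slicegraph)$, and by the very definition of the $\widthG$-numbering expansion we recover $\boldU'\in\lang(\numberingExtension^{\widthG}(\slicegraph))$.

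For the reverse direction, let $\boldU'\in\lang^{\cap}$. The first membership together with Lemma \ref{lemma:SliceGraphSubgraphs} ensures that $\boldU'$ is a sub-unit-decomposition of $\boldU$ of slice-width at most $c$; call its composition $H$. The second membership provides, by construction of $\numberingExtension^{\widthG}(\slicegraph)$, a normalized counterpart $\boldU''\in\lang(\slicegraph)$ which differs from $\boldU'$ only by the numerical labels attached to frontier vertices and not by the underlying wiring or center content. Hence the composition of $\boldU''$ is a digraph $H'\in\lang_{\graph}(\slicegraph)$ with $H\cong H'$, which is exactly what the theorem requires.

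The principal subtlety, and the point at which $z$-dilated saturation is indispensable rather than mere equality of graph languages (compare the warning preceding Proposition \ref{proposition:Intersection}), is that the unit decomposition $\boldU'$ arising from a sub-slicing of $\boldU$ is genuinely dilated and is tied to a specific compatible ordering that $\slicegraph$ need not contain a priori. Saturation guarantees that every such dilated decomposition, for every graph in $\lang_{\graph}(\slicegraph)$ and every $z$-topological ordering thereof, lies in $\lang(\slicegraph)$, so the intersection faithfully captures all desired sub-unit-decompositions. The only care needed is to verify that the renumbering between $\boldU'$ and its normalized form $\boldU''$ is precisely the kind of $\widthG$-numbering parametrized by the pairs $(in,out)$ used to define $\numberingExtension^{\widthG}$; this is where ordering-preservation of the numberings plays its role and closes the argument.
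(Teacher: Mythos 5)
Your proof is correct and follows essentially the same route as the paper's: both directions proceed by passing between the inherited-numbering sub-unit-decomposition $\boldU'$ and its normalized counterpart $\boldU''$ (which is exactly the content of Proposition \ref{proposition:NumberingExtension} / the definition of $\numberingExtension^{\widthG}$), invoking Lemma \ref{lemma:SliceGraphSubgraphs} for membership in $\lang(\subgraphsSlicegraph^c(\boldU))$, Proposition \ref{proposition:SubUnitDecompositionZigZagNumber} to bound the zig-zag number, and $z$-dilated saturation to conclude $\boldU''\in\lang(\slicegraph)$. No gaps; your closing remarks on where saturation and order-preserving renumbering are used match the paper's reasoning.
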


Let $\slicegraph=(\mathcal{V},\mathcal{E},\mathcal{S},\mathcal{I},\mathcal{T})$ be a slice graph and $(\Omega,+)$ be a finite commutative
semigroup with an identity element $0$. Then 
the {\em $\Omega$-weight expansion} of $\slicegraph$ is the slice graph 
$\mathcal{W}^{\Omega}(\slicegraph)=(\mathcal{V}',\mathcal{E}',\mathcal{S}',\mathcal{I}',\mathcal{T}')$ defined as 
follows: For each vertex $\slicegraphvertex\in \mathcal{V}$ labeled with the slice $\mathcal{S}(\slicegraphvertex)=(V,E,l)$,
we add the set of vertices $\{\slicegraphvertex_{w,tot}\}_w$ to $\mathcal{V}'$ where 
$w$ ranges over all weighting functions $w:E\rightarrow \Omega$ and $tot$ ranges over $\Omega$. We label each 
$\slicegraphvertex_{w,tot}$ with the tuple $(\mathcal{S}(\slicegraphvertex),w,tot)$. Then we add an edge $(\slicegraphvertex_{w,tot}, \slicegraphvertex'_{w',tot'})$ to 
$\mathcal{E}'$ if and only if $(\slicegraphvertex, \slicegraphvertex')\in \mathcal{E}$, if the slice $(\mathcal{S}(\slicegraphvertex),w)$ 
can be glued to the slice $(\mathcal{S}(\slicegraphvertex'),w')$ and if $tot'=tot + \sum_{e\in E'^{out}} w(e)$.
The set of final vertices $\mathcal{T}'$ consists of all vertices in $\mathcal{V}'$ which are labeled with a triple $(\boldS,w,tot)$ 
where $\boldS$ is a final slice. The set of initial vertices $\mathcal{I}'$ consists of all vertices 
in $\mathcal{V}'$ which are labeled with a triple $(\boldS,w,0)$ where $\boldS$ is an initial slice. 
Intuitively if $\slicegraph$ generates a language of graphs $\lang_{\graph}$, then $\mathcal{W}^{\Omega}(\slicegraph)$ generates the 
language $\lang_{\graph}'$ of all possible weighted versions of graphs in $\lang_{\graph}(\slicegraph)$.
In Theorem \ref{theorem:MainTechnical} below $q$ is the cut-width of $G$ and therefore it can be as large 
as $O(n^2)$. The parameter $c$ on the other hand is the slice-width of the subgraphs that are being counted.

\vspace{-3pt}
\begin{theorem}[Subgraphs in a Saturated Slice Language]
\label{theorem:MainTechnical}
Let $G=(V,E)$ be a digraph of cut-width $\widthG$ with respect to a $z$-topological ordering $\omega=(v_1,v_2,...,v_n)$ of its vertices, 
and let $\slicegraph$ be a deterministic normalized $z$-dilated-saturated slice graph over $\slicealphabet^{c,q}$ on $\sizeSliceGraph$ vertices.
Let $w:E\rightarrow \Omega$ be an weighting function on $E$ and $l=O(n)$ be a number. 
Then we may count in time $\sizeSliceGraph^{O(1)}\cdot n^{O(c)}\cdot \widthG^{O(c)}$ the number of subgraphs of $G$ of size 
$l$, that are isomorphic to some subgraph in $\lang_{\graph}(\slicegraph)$ and have maximal weight.
\end{theorem}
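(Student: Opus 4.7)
The plan is to reduce the counting to dynamic programming on an acyclic product automaton built from $\slicegraph$ and from a canonical unit decomposition of $G$.

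From the $z$-topological ordering $\omega=(v_1,\ldots,v_n)$ of $G$ I would first form the canonical normalized unit decomposition $\boldU=\boldS_1\boldS_2\cdots\boldS_n$ in which $\boldS_i$ has $v_i$ as its unique center vertex; by construction $\boldU$ has slice-width $\widthG$ and zig-zag number $z$. Lemma \ref{lemma:SliceGraphSubgraphs} then produces, in time $n\cdot\widthG^{O(c)}$, an acyclic deterministic slice graph $\subgraphsSlicegraph^c(\boldU)$ of the same size whose slice language is the set of all sub-unit-decompositions of $\boldU$ of slice-width at most $c$; Proposition \ref{proposition:SubUnitDecompositionZigZagNumber} guarantees that each such sub-unit-decomposition has zig-zag number at most $z$.

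Since sub-slices of $\boldU$ inherit their frontier numberings as arbitrary monotone injections into $\{1,\ldots,\widthG\}$, while the slices of $\slicegraph$ are normalized, I would form the $\widthG$-numbering expansion $\numberingExtension^{\widthG}(\slicegraph)$, of size $\sizeSliceGraph\cdot\widthG^{O(c)}$. Theorem \ref{theorem:SliceLanguageSubgraphs} states that $\lang(\subgraphsSlicegraph^c(\boldU))\cap\lang(\numberingExtension^{\widthG}(\slicegraph))$ captures exactly the sub-unit-decompositions of $\boldU$ whose compositions are isomorphic to graphs in $\lang_{\graph}(\slicegraph)$. Using determinism and acyclicity of both slice graphs, I realise this intersection as a standard product slice graph $\mathcal{P}$ of size $n\cdot\sizeSliceGraph\cdot\widthG^{O(c)}$, which remains deterministic and acyclic.

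I then annotate each vertex of $\mathcal{P}$ with a pair $(k,tot)\in\{0,\ldots,l\}\times\Omega$ tracking the number of chosen center vertices and the accumulated edge-weight, keeping only transitions consistent with the weights inherited from $w$. Since $l=O(n)$ and $|\Omega|=n^{O(1)}$, the annotated structure $\mathcal{A}$ has size $\sizeSliceGraph^{O(1)}\cdot n^{O(c)}\cdot\widthG^{O(c)}$. Processing $\mathcal{A}$ in topological order, a standard dynamic program computes, for every vertex and every $(k,tot)$, the number of initial-to-vertex walks; summing over final states with $k=l$ yields, for each $tot$, the number of target subgraphs of weight $tot$, and reading off the maximum realised $tot$ produces the desired count within the claimed time bound.

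The main obstacle is correctness of this count. On one hand, every admissible subgraph $H$ of $G$ must contribute at least one walk: if $H\in\lang_{\graph}(\slicegraph)$ then the unique sub-unit-decomposition $\boldU'$ of $\boldU$ realising $H$ is a dilated decomposition of $H$ compatible with the restriction of $\omega$ to $V(H)$, and so by the $z$-dilated-saturated hypothesis lies in $\lang(\slicegraph)$, and after $\widthG$-renumbering also in $\lang(\numberingExtension^{\widthG}(\slicegraph))$; this ultimately rests on Proposition \ref{proposition:Intersection}. On the other hand, $H$ must contribute exactly one walk: once $H$ is fixed, each $\boldS_i'\subseteq\boldS_i$ is forced by the intersection of $H$ with $\boldS_i$, so the sub-unit-decomposition of $\boldU$ realising $H$ is unique, and determinism of $\mathcal{P}$ then yields a bijection between admissible subgraphs and accepting walks, ruling out any overcounting.
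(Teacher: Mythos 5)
Your proof is correct and follows essentially the same route as the paper: build the canonical unit decomposition $\boldU$ from $\omega$, invoke Lemma \ref{lemma:SliceGraphSubgraphs} for $\subgraphsSlicegraph^c(\boldU)$, take the $\widthG$-numbering expansion of $\slicegraph$ and use Theorem \ref{theorem:SliceLanguageSubgraphs} to reduce the problem to path counting in an acyclic deterministic product slice graph. The one organizational difference is that the paper performs the weight tracking and the size-$l$ restriction as further slice-graph-level operations --- it intersects with $\mathcal{W}^{\Omega}(\numberingExtension^q(\slicegraph))$, where each state already carries a weighting and an accumulator $tot$, and then separately intersects with a slice graph $\slicegraph^l$ that accepts only strings with exactly $l$ non-permutation unit slices, before pruning non-maximal-weight final states --- whereas you fold both the vertex count $k$ and the weight accumulator $tot$ into a single annotation pass over the product automaton. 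These are equivalent: your annotated structure is exactly the paper's triple product up to relabeling, and the size and running-time bounds match. A small imprecision worth flagging: when you argue completeness you say the sub-unit-decomposition $\boldU'$ of $\boldU$ realising $H$ "by the $z$-dilated-saturated hypothesis lies in $\lang(\slicegraph)$"; strictly, $\boldU'$ is not normalized (its frontier numberings are inherited from $\boldU$), so it is the normalization $\boldU''$ of $\boldU'$ that lies in $\lang(\slicegraph)$, and then Proposition \ref{proposition:NumberingExtension} places $\boldU'$ itself in $\lang(\numberingExtension^{\widthG}(\slicegraph))$. You do gesture at this with "after $\widthG$-renumbering", and your explicit bijection argument (uniqueness of the induced sub-unit-decomposition plus determinism of the product) makes the no-overcounting claim, which the paper leaves mostly implicit, cleaner.
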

\begin{proof}
Let $\boldU=\boldS_1\boldS_2...\boldS_n$ be any normalized unit decomposition that is compatible with $\omega$, i.e.,
such that $v_i$ is the center vertex of $\boldS_i$ for $i=1,...,n$. Clearly such a unit decomposition can be constructed in 
polynomial time in $n$. Since $\slicegraph$ is dilated saturated, by 
Theorem \ref{theorem:SliceLanguageSubgraphs} the set of all subgraphs of $G$ that are isomorphic to some digraph in $\lang_{\graph}(\slicegraph)$
is represented by the regular slice language $\lang(\subgraphsSlicegraph^c(U))\cap \lang(\mathcal{W}^{\Omega}(\numberingExtension^q(\slicegraph)))$.
By Lemma \ref{lemma:SliceGraphSubgraphs} $\subgraphsSlicegraph^c(U)$ has $n\cdot \widthG^{O(c)}$ vertices and can be constructed 
within the same time bounds. The numbering expansion $\numberingExtension^q(\slicegraph)$ of $\slicegraph$ has \\ $\binom{q}{O(c)}\cdot \sizeSliceGraph = \sizeSliceGraph\cdot q^{O(c)}$
vertices and can be constructed within the same time bounds. The $\Omega$-expansion  $\mathcal{W}^{\Omega}(\numberingExtension^q(\slicegraph))$ of 
$\numberingExtension^q(\slicegraph)$ has $|\Omega|^{O(c)}\cdot\sizeSliceGraph\cdot q^{O(c)}=n^{O(c)}\cdot \sizeSliceGraph \cdot q^{O(c)}$ 
vertices and can be constructed within the same time bounds. 
Let $\slicegraph^{\cap} = \mathcal{W}^{\Omega}( \numberingExtension^q(\slicegraph))  \cap \subgraphsSlicegraph^c(\boldU)$. Since $\slicegraph^{\cap}$
can be obtained by a product construction, it has $r\cdot n^{O(c)}\cdot  q^{O(c)}$ vertices. 
Since $\subgraphsSlicegraph^c(\boldU)$ is acyclic, $\slicegraph^{\cap}$ is also acyclic.
Therefore counting the subgraphs in $G$ isomorphic to some graph in $\lang_{\graph}(\slicegraph)$ amounts to counting the number of simple directed
paths from an initial to a final vertex in $\slicegraph^{\cap}$. Since we are only interested 
in counting subgraphs with $l$ vertices, we can intersect this acyclic slice graph with the 
slice graph $\slicegraph^l$ generating all unit decomposition over $\slicealphabet^{c,q}$ containing precisely $l$ unit
slices that are not permutation slices. Again the slice graph $\slicegraph^{\cap}\cap \slicegraph^l$ will 
be acyclic. Finally since we are only interested in counting maximal-weight subgraphs, we delete from 
$\mathcal{T}'$ those vertices labeled with triples $(\boldS,w,tot)$ in which $tot$ is not maximal. 
The label of each path from an initial to a final vertex in this last slice graph identifies unequivocally a subgraph 
of $G$ of size $l$ and maximal weight. By standard dynamic programming we can count the number of paths in a DAG from a set of initial vertices to a set of final vertices in time polynomial on the number of vertices of the DAG (Proposition \ref{proposition:CountingDAG}). 
Thus we can determine the number of $l$-vertex maximal-weight subgraphs of $G$ which are isomorphic 
to some digraph in $\lang(\slicegraph)$ in time $r^{O(1)}n^{O(c)} q^{O(c)}$. 
$\square$
\end{proof}

\vspace{-15pt}
\section{Subgraphs Satisfying a given MSO property}
\label{section:MSO}
\vspace{-7pt}

In this section we will only give the necessary definitions to state Lemma \ref{lemma:MonadicRegularSet} and 
Theorem \ref{theorem:MonadicTechnical}, which are crucial steps towards the proof of Theorem \ref{theorem:MainTheorem}.
For an extensive account on the monadic second order logic of graphs we refer the reader to the treatise 
\cite{CourcelleEngelfriet2012} (in special Chapters 5 and 6). As it is customary, we will represent a digraph $G$ by a relational structure $G=(V,E,s,t,l_V,l_E)$
where $V$ is a set of vertices, $E$ a set of edges, $s,t \subseteq E\times V$ are respectively the source 
and tail relations, $l_V\subseteq V\times \Sigma_V$ and $l_E\subseteq V\times \Sigma_E$ are respectively the vertex-labeling and 
edge-labeling relations. We give the following semantics to 
these relations: $s(e,v)$ and $t(e,v')$ are true if $v$ and $v'$  are respectively the source and the tail of the edge $e$;
$l_V(v,a)$ is true if $v$ is labeled with the symbol $a \in \Sigma_V$ while $l_E(e,b)$ is true if $e$ is labeled with the symbol 
$b\in \Sigma_E$. We always assume that $e$ is oriented from its source to its tail.
Let $\{x,y,z,z_1,y_1,z_1,...\}$ be an infinite set of first order variables and $\{X,Y,Z,X_1,Y_1,Z_1,...\}$ be an infinite set of 
second order variables. Then the set of $MSO_2$ formulas is the smallest set of formulas containing: 
\begin{itemize}
	\item the atomic formulas $x\in X$, $V(x)$, $E(x)$, $s(x,y)$, $t(x,y)$, $l_V(x,a)$ for each $a\in \Sigma_V$, 
		$l_E(x,b)$ for each $b\in \Sigma_E$, 
	\item the formulas $\varphi \vee \psi$, $\neg \varphi$, $\exists x.\varphi(x)$ and $\exists X.\varphi(X)$, where 
		$\varphi$ and $\psi$ are $MSO_2$ formulas. 
\end{itemize}

If $\mathcal{X}$ is a set of second order variables, and $G=(V,E)$ is a graph, then an interpretation 
of $\mathcal{X}$ over $G$ is a function $M:\mathcal{X}\rightarrow 2^{V}$ that assigns to each variable 
in $\mathcal{X}$ a subset of vertices of $V$. The semantics of a formula $\varphi(\mathcal{X})$
over free variables $\mathcal{X}$ being true on a graph $G$ under interpretation $M$ is the usual one. 
A sentence is a formula $\varphi$ without free variables. For a sentence $\varphi$ and a graph
$G$, if it is the case that $\varphi$ is true in $G$, then we say that $G$ 
satisfies $\varphi$ and denote this by $G\models \varphi$. Now we are in a position to state a crucial Lemma 
towards the proof of Theorem \ref{theorem:MainTheorem}. Intuitively it states that for any \msotwo formula $\varphi$ 
the set of all unit decompositions of a fixed width of digraphs satisfying $\varphi$ forms a regular set. 

\begin{lemma} 
\label{lemma:MonadicRegularSet} For any \msotwo sentence $\varphi$ over digraphs and any $c\in \N$, 
the set $\lang(\varphi,\slicealphabet^c)$ of all slice strings $\boldS_1\boldS_2...\boldS_k$ 
over $\slicealphabet^c$ such that $\boldS_1\circ \boldS_2 \circ ...\circ \boldS_k = G$ and $G\models \varphi$ is a 
regular subset of $(\slicealphabet^c)^*$.  
\end{lemma}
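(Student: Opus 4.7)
The plan is to establish Lemma \ref{lemma:MonadicRegularSet} via a Myhill-Nerode style argument, adapting the strategy underlying Courcelle's theorem from tree decompositions to linear slice decompositions. Fix the slice width $c$ and let $q$ denote the quantifier rank of $\varphi$. It suffices to construct a deterministic finite automaton over $\slicealphabet^c$ whose states describe the \msotwo theory up to quantifier rank $q$ of the partial graph composed so far, relative to its out-frontier, and to show that this state set is finite.

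First I would associate to each prefix $\boldU = \boldS_1\boldS_2\cdots\boldS_m$ over $\slicealphabet^c$ the partial graph $G(\boldU) = \boldS_1\circ\boldS_2\circ\cdots\circ\boldS_m$. The out-frontier of $G(\boldU)$ carries at most $c$ vertices and at most $c$ incident edges, all bearing labels drawn from a bounded set. Regarding these frontier vertices and edges as distinguished constants, define the $q$-type $\tau_q(\boldU)$ to be the set of all \msotwo formulas of quantifier rank at most $q$, over the vocabulary augmented with these constants, that hold in $G(\boldU)$. Only finitely many such formulas exist modulo logical equivalence (by the standard Ehrenfeucht-Fraisse argument for MSO), so the number of distinct $q$-types is bounded by a computable function of $q$ and $c$.

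Next I would prove a composition lemma: $\tau_q(\boldU_1 \boldU_2)$ is determined by $\tau_q(\boldU_1)$, $\tau_q(\boldU_2)$, and the combinatorial data of the glueing interface, namely the numbering bijection between the out-frontier of $\boldU_1$ and the in-frontier of $\boldU_2$ together with the orientations of the frontier edges. This is a Feferman-Vaught style statement: since $G(\boldU_1)\circ G(\boldU_2)$ overlaps in a bounded-size interface, an Ehrenfeucht-Fraisse game on the composition reduces to games on each piece, parametrised by that interface. Granting the lemma, the target DFA has the $q$-types as states together with a rejecting sink for ill-composable prefixes; transitions are given by applying the composition lemma to a single slice letter; the initial state is the $q$-type of the empty prefix; and the accepting states are those $q$-types of complete slice strings (with empty out-frontier, so no constants remain) that contain $\varphi$ itself.

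The main obstacle is proving the composition lemma under slice glueing, which is not a disjoint union but a genuine identification of both vertices and their unique incident frontier edges. One must set up the interface signature carefully: each frontier vertex is bundled with its unique incident edge and its orientation, and the pair is treated as a named constant. Because each frontier vertex is incident to exactly one edge on each side of the glueing, this edge identification is unambiguous, the interface signature remains of bounded size, and the standard Feferman-Vaught composition theorem for MSO relative to this interface applies to yield the lemma. Setting $q$ equal to the quantifier rank of $\varphi$ then produces a finite DFA recognising $\lang(\varphi,\slicealphabet^c)$, establishing its regularity.
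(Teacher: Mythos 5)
Your proposal is correct but takes a genuinely different route from the paper. You argue via the composition method (Ehrenfeucht--Fra\"iss\'e / Feferman--Vaught): associate to each prefix the quantifier-rank-$q$ type of the composed slice with its out-frontier vertices and their incident edges as named constants, observe that there are only finitely many such types, prove a composition lemma, and take the types as states of a DFA. The paper instead follows the B\"uchi--Elgot--Trakhtenbrot syntactic translation: it first swaps the composition $\circ$ for a variant $\oplus$ that keeps the frontier edges explicit (so that composition is essentially a disjoint union plus a bounded number of added edges, and $\circ$ is MSO-interpretable in $\oplus$ via a \emph{ConsecutiveFrontiers} predicate), enriches the slice alphabet with Boolean interpretation matrices recording membership of vertices and edges in the free second-order variables, and builds an automaton by structural induction on $\varphi$: atomic formulas yield automata directly, Boolean connectives use closure of regular languages under union, intersection and complement, and existential quantifiers are eliminated by projecting the corresponding row out of the alphabet. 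The paper's approach is entirely syntactic and constructive and never plays games on the fused structure; yours is semantic and makes the automaton's state set conceptually transparent. The one step you should not merely cite but actually prove is the composition lemma: the operation $\circ$ is not a glueing along a shared boundary but a fusion that deletes the frontier vertices and merges pairs of frontier edges, so the standard Feferman--Vaught theorem does not apply off the shelf. Your fix---bundling each frontier vertex with its unique incident edge and orientation into a single named constant---is the right idea, and the paper's reduction from $\circ$ to $\oplus$ is precisely its way of sidestepping the same delicate point.
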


Lemma \ref{lemma:MonadicRegularSet} gives a slice theoretic 
analog of Courcelle's model checking theorem: In order to verify whether a digraph $G$ of existential slice-width at most $c$ satisfies
a given MSO property $\varphi$, one just needs to find a slice decomposition $\boldU=\boldS_1\boldS_2...\boldS_n$ of $G$ and 
subsequently verify whether the deterministic finite automaton (or slice graph) accepting $\lang(\varphi, \slicealphabet^{c})$ accepts $\boldU$.
However the goal of the present work is to make a rather different use of Lemma \ref{lemma:MonadicRegularSet}. Namely, next in Theorem \ref{theorem:MonadicTechnical} 
we will restrict Lemma \ref{lemma:MonadicRegularSet} in such a way that it concerns only $z$-saturated regular slice languages, so that 
it can be coupled to Theorem \ref{theorem:MainTechnical}, yielding in this way a proof of our main theorem (Theorem \ref{theorem:MainTheorem}).
%
\begin{theorem}
\label{theorem:MonadicTechnical}
For any \msotwo formula $\varphi$ and any $k,z\in \N$, one may effectively construct a $z$-dilated-saturated slice graph $\slicegraph(\varphi,k,z)$ 
over the slice alphabet $\slicealphabet^{k\cdot z}$ whose graph language $\lang_{\graph}(\slicegraph(\varphi,k,z))$ consists precisely of the 
digraphs of zig-zag number at most $z$ that satisfy $\varphi$ and that are the union of $k$ directed paths. 
\end{theorem}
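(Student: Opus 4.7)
The plan is to build $\slicegraph(\varphi,k,z)$ as the intersection of three regular slice languages over $\slicealphabet^{k\cdot z}$, followed by a saturation closure. First I would observe that every digraph $H$ meeting the three conditions has slice-width at most $k\cdot z$ with respect to any of its $z$-topological orderings, since each of the $k$ directed paths contributes at most $z$ edges to every cut. So the alphabet $\slicealphabet^{k\cdot z}$ is rich enough to encode every unit decomposition that should appear in $\lang(\slicegraph(\varphi,k,z))$.

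Second, I would apply Lemma \ref{lemma:MonadicRegularSet} to the single \msotwo sentence $\psi=\varphi\wedge\psi_{k}$, where $\psi_{k}$ existentially quantifies over $k$ edge-sets $E_{1},\dots,E_{k}$ and asserts that each forms a directed simple path and that their union is the full edge set. This yields a regular slice language $\lang_{\psi}$ over $\slicealphabet^{k\cdot z}$ whose members are exactly the unit decompositions whose composition satisfies $\varphi$ and is the union of $k$ directed paths.

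The crucial step is then to cut this language down to those unit decompositions whose induced linear ordering is actually $z$-topological for the composed graph, i.e.\ whose zig-zag number as a decomposition is at most $z$. I would do this with a dedicated finite slice automaton $\automaton_{z}$. At each frontier $O_{i}$ the automaton would keep a bounded profile recording, for every tuple of at most $z+1$ frontier edges, whether those edges can lie simultaneously on a single directed simple path of the prefix composed so far. Reading a new unit slice updates the profiles using only the local glueing at its center vertex, and the automaton rejects the moment some tuple of more than $z$ frontier edges becomes jointly realizable. Since the frontier has size at most $k\cdot z$, only finitely many profiles occur, so $\automaton_{z}$ has finite state and defines a regular slice language $\lang_{z}$. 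The intersection $\lang_{\psi}\cap\lang_{z}$, translated to a slice graph $\slicegraph_{0}$, then accepts precisely the unit decompositions of slice-width at most $k\cdot z$ whose induced ordering is $z$-topological and whose composed graph is the union of $k$ directed paths satisfying $\varphi$; its graph language is thus exactly the digraphs of zig-zag number at most $z$ that satisfy $\varphi$ and are the union of $k$ paths, because for any such $H$ the unit decomposition compatible with any $z$-topological ordering of $H$ lies in the intersection.

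Finally, I would saturate $\slicegraph_{0}$ to obtain $\slicegraph(\varphi,k,z)$: apply the $q$-numbering expansion $\numberingExtension^{q}$ so that every admissible frontier numbering is included, and at every state add self-loops labeled by every permutation slice over $\slicealphabet^{k\cdot z,q}$, including the empty slice, so that arbitrary insertion of permutation or identity slices is absorbed. Neither operation changes the composed graph or its induced ordering, so the \msotwo condition, the path-union condition, and the zig-zag bound all survive; consequently every dilated unit decomposition compatible with every $z$-topological ordering of every $H$ in the graph language is accepted. The hardest step I expect is the design of $\automaton_{z}$: directed simple paths are a non-local object, and the delicate point is to isolate exactly which frontier profile is both small enough to keep the state space finite and rich enough to be updated soundly and completely from purely local information each time a new center vertex is attached. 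Once that is in place, the rest of the construction reduces to standard closure and product constructions on regular slice languages.
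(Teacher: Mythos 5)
There is a genuine gap at the central step of your construction. You delegate the zig-zag bound to a bespoke finite slice automaton $\automaton_{z}$, but you leave its state space undefined and yourself flag that you do not know which frontier profile is ``small enough to keep the state space finite and rich enough to be updated soundly and completely.'' Worse, the profile you do sketch is not the right invariant: whether $z+1$ edges crossing a given frontier lie on a common directed simple path of the \emph{final} graph depends on path segments in the suffix as much as in the prefix, so tracking ``joint realizability on a single simple path of the prefix composed so far'' cannot certify or refute the zig-zag bound; one would need linkage-type profiles recording which tuples of frontier edges can be joined by suitably disjoint prefix segments. In fact, as stated the test is vacuous for $z\geq 2$: every frontier vertex has degree one in the prefix, so a simple path of the prefix can touch at most two vertices of the current frontier, and your automaton would never fire its rejection condition. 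So the one component your argument hinges on is both unconstructed and, in its sketched form, incorrect.

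The paper avoids building any such automaton. Its key observation is that the zig-zag condition on a unit decomposition is itself \msotwo-expressible over the $\oplus$-composed slice structure, because Lemma \ref{lemma:MonadicRegularSet} is proved for a vocabulary that includes the $sameFrontier$ predicate: the formula $ZigZag(z)$ says that no $z+1$ vertices of a single path lie on the same frontier. Hence one applies Lemma \ref{lemma:MonadicRegularSet} once, to $\varphi\wedge ZigZag(z)\wedge Unitable(k)$ over $\slicealphabet^{k\cdot z}$, and all three constraints are enforced simultaneously by regularity of MSO-definable slice languages --- exactly the mechanism you already trust for $\varphi\wedge\psi_k$. Your saturation step is also heavier than needed: since Lemma \ref{lemma:MonadicRegularSet} yields \emph{all} strings over $\slicealphabet^{k\cdot z}$ whose composition satisfies the formula (dilated ones included), saturation follows directly from Proposition \ref{proposition:SliceWidthFacts}.\ref{item:ZigZagNumberPathCovering} --- any decomposition of zig-zag number at most $z$ of a union of $k$ paths has slice-width at most $k\cdot z$ and is therefore already in the language --- with no need for added permutation-slice self-loops, and the $q$-numbering expansion belongs to the later subgraph-counting step (Theorem \ref{theorem:SliceLanguageSubgraphs}), not here. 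If you replace $\automaton_{z}$ by the MSO formula $ZigZag(z)$ and drop the extra closure operations, your argument collapses into the paper's proof.
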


Finally we are in a position to prove Theorem \ref{theorem:MainTheorem}. The proof will follow 
from a combination of Theorems \ref{theorem:MonadicTechnical} and \ref{theorem:MainTechnical}.

\paragraph{\bf Proof of Theorem \ref{theorem:MainTheorem}}
Given a monadic second order formula $\varphi$, and positive integers $k$ and $z$, first we construct
the dilated-saturated slice graph $\slicegraph(\varphi,z,k)$ over $\slicealphabet^{k\cdot z}$ 
as in Theorem \ref{theorem:MonadicTechnical}. Since the slice-width of a digraph is at most $O(n^2)$
if we plug $q=O(n^2)$, $r=|\slicegraph(\varphi,z,k)|$ and $\slicegraph(\varphi,z,k)$ into Theorem \ref{theorem:MainTechnical}, 
and if we let $f(\varphi,z,k)=r^{O(1)}$, then we get an overall upper bound 
of $f(\varphi,z,k)\cdot n^{O(k\cdot z)}$ for computing the number of subgraphs of $G$ that satisfy
$\varphi$ and that are the union of $k$-directed paths.
$\square$

\vspace{-10pt}
\section{Final Comments}
\label{section:Conclusion}
\vspace{-10pt}

In this work we have employed slice theoretic techniques
to obtain the polynomial time solvability of many natural 
combinatorial questions on digraphs of constant directed 
path-width, cycle rank, K-width and DAG-depth. 
We have done so by using the zig-zag number of a digraph as a 
point of connection between these directed width measures, 
regular slice languages and the monadic second order logic of graphs.
Thus our results shed new light into a field that has 
resisted algorithmic metatheorems for more than a decade.
More precisely, we showed that despite the severe restrictions imposed by the
impossibility results in \cite{Kreutzer2012,KreutzerTazari2010,GanianHlinenyLangerObdrzalekRossmanithSikdar2012,GanianHlinenyKneisMeisterObdrzalekRossmanithSikdar2010},
it is still possible to develop logical-based algorithmic
frameworks that are able to represent a considerable variety of 
interesting problems.

\vspace{-15pt}
\section{Acknowledgements}
\label{section:Acknowledgements}
\vspace{-10pt}
The author would like to thank Stefan Arnborg for interesting discussions
about the monadic second order logic of graphs and for providing valuable 
comments and suggestions on this work. 

\bibliographystyle{abbrv}
\bibliography{subgraphsSatisfyingMSOPropertiesOnZTopologicallyOrderableDigraphs-IPEC}

\newpage
\appendix

\begin{figure}[!hb] 
\centering 
\includegraphics[scale=0.35]{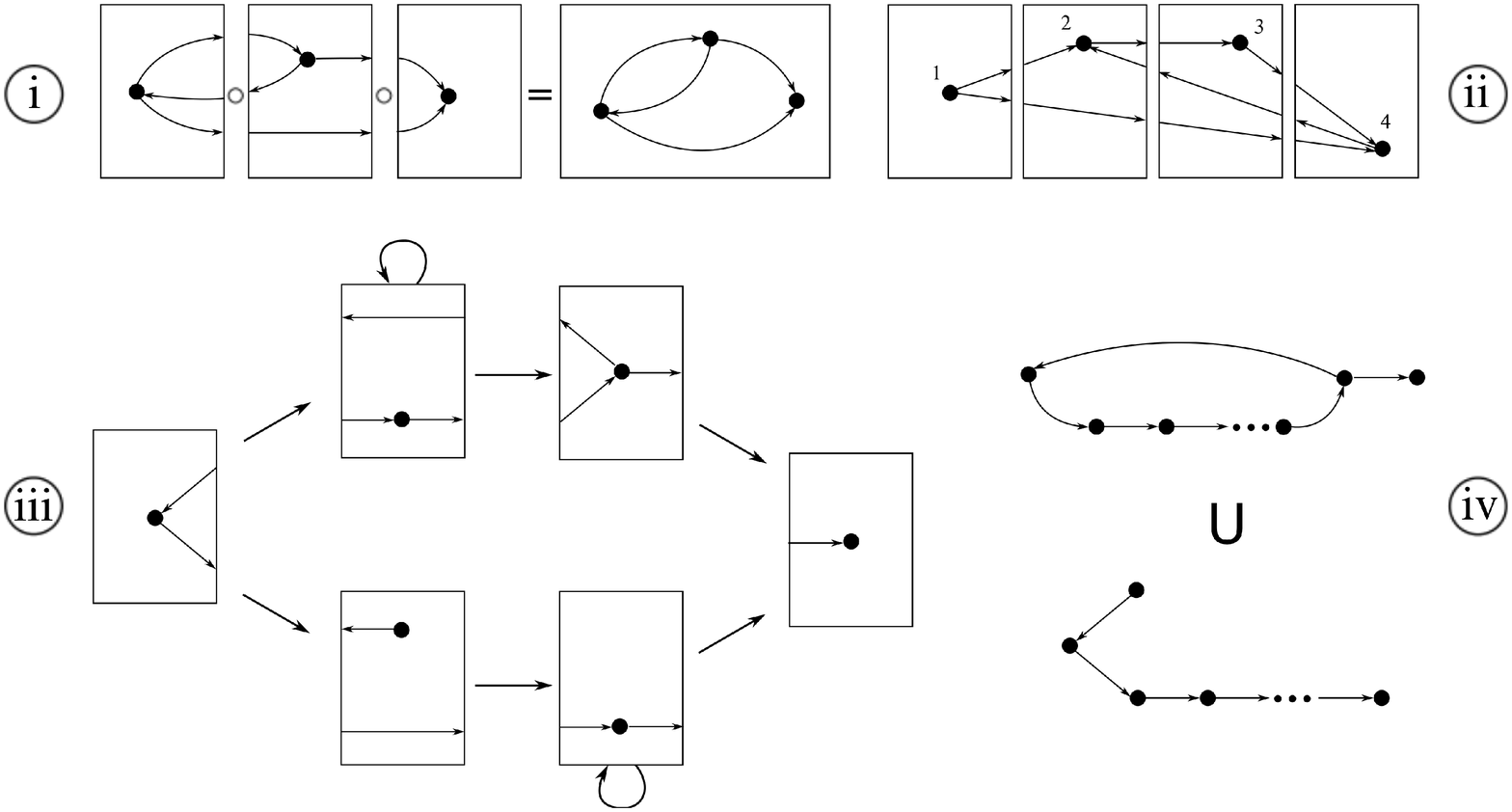}
\caption{$i$) Composition of Slices. $ii)$ A unit decomposition of zig-zag number $3$. 
The path $1-2-3-4$ has zig-zag number $1$ while the path $1-4-2-3$ has zig-zag number $3$ 
$iii)$ A slice graph $\slicegraph$ $iv$) The graph language represented by $\slicegraph$. Following 
the upper branch of $\slicegraph$ the generated graphs are cycles of size at least $3$ with a protuberance. 
Following the lower branch, the generated graphs are directed lines of size at least $4$.
} 
\label{figure:BasicSlices}
\label{figure:SliceGraphgraph}
\end{figure}

\begin{figure}[!hb] 
\centering 
\includegraphics[scale=0.37]{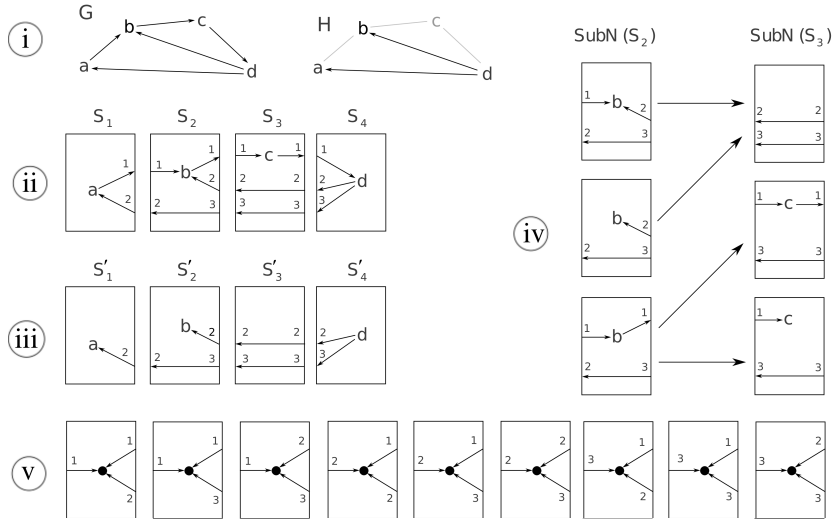}
\caption{ $i$) A digraph $G$ and a subdigraph $H$ of $G$ $ii)$ A normalized unit decomposition of $G$ $iii)$ An unnormalized dilated unit decomposition of $H$. 
$\boldS_3'$ is a permutation slice $iv)$ Some sub-slices in $\subslicesNumber(\boldS_2)$ and in $\subslicesNumber(\boldS_3)$ and how they are connected. 
$v$) A normalized slice (leftmost slice) and its $3$-numbering expansion (all slices together).} 
\label{figure:Subslicings}
\label{figure:NumberingExpansion}
\end{figure}

\newpage

\section{Proof of Theorem \ref{theorem:DirectedPathWidthZigZagNumber}}
\label{section:ZigZagNumber-OtherDigraphMeasures}

Let $G=(V,E)$ be a digraph. Then a directed path decomposition of $G$
is a sequence $\mathcal{P}=X_1,X_2,...,X_p\subseteq V$ of subsets of vertices
of $G$ such that $i$) $\cup_{j=1}^p X_j = V$, $ii$) for 
every $i,j,k$ with $i< j < k$, $X_i\cap X_k \subseteq X_j$ and 
$iii)$ for every directed edge $(u,v)\in E$, 
there exists a pair $i,j$ of indexes with 
$i\leq j$ such that $u\in X_i$ and $v\in X_j$. The width of 
$\mathcal{P}$, denoted $dpw(G,\mathcal{P})$ is the size 
of the largest set in $\mathcal{P}$. The directed path-width of 
$G$, denoted $dpw(G)$ is the minimal value of $dpw(G,\mathcal{P})$
where $\mathcal{P}$ ranges over all path decompositions of $G$. 
In this section we will prove the two claims made in Theorem \ref{theorem:DirectedPathWidthZigZagNumber}. 
The first, stating that constant directed path-width implies constant zig-zag number ({\bf Part I}), and the second 
stating that there are graphs of constant zig-zag number but unbounded {\bf directed} path-width ({\bf Part II}). 

Before proving the first part of Theorem \ref{theorem:DirectedPathWidthZigZagNumber}, we will first define 
another digraph measure, the {\em directed vertex separation number} (d.v.s.n.) of a digraph. 
Let $\omega=(v_1,v_2,...,v_n)$ be a linear ordering of the vertices
of a digraph $G$. Then the directed vertex separation number of $G$ with 
respect to $\omega$, denoted by $dvsn(G,\omega)$,  is the maximal number of vertices in $\{v_i,v_{i+1},...,v_n\}$ 
that have a successor in $\{v_1,...,v_{i-1}\}$ for some $i$.

\begin{equation}
\label{equation:DirectedVertexSeparationNumber}
dvsn(G,\omega) = \max_{i}|\{ v_j | (v_j,v_k)\in E, j\geq i, k<i  \}|
\end{equation}

The directed vertex separation number of a digraph $G$, denoted by $dvsn(G)$, is the minimal value of $dvsn(G,\omega)$ among all 
linear orderings of the vertex of $G$. It can be shown that the 
d.v.s.n. of a graph is equal to its directed path-width \cite{YangCao2008}, and 
that given a linear ordering $\omega$ of $G$ of d.v.s.n. equal to $d$, one 
can construct efficiently a directed path decomposition of $G$ of width $d$, and vice versa. 
Additionally, given a positive integer $d$, one may determine in time $O(mn^{d+1})$ whether $G$ 
has an ordering $\omega$ satisfying $dvsn(G,\omega)\leq d$, and in case it exists, 
return it in the same amount of time \cite{Tamaki2011}. Therefore to prove that 
bounded path-width implies bounded zigzag number, it is enough to show that any ordering of $G$ of $dvsn(G)=d$ has zig-zag number at most $2d+1$. 

\paragraph{{\bf Proof of Theorem \ref{theorem:DirectedPathWidthZigZagNumber} - Part I}}
Let $G$ be a digraph of directed path-width $d$. Then it follows from \cite{YangCao2008}
that there exists a linear ordering $\omega=(v_1,...,v_n)$ of the vertices of $G$ such that 
of direct vertex separation number $d$. Therefore for any $i\in \{1,...,n-1\}$ we have that 
there are at most $d$ vertices $v_{j_1},...,v_{j_d}$ in $V[i..n]=\{v_i,...,v_n\}$ which are the source of an edge with target in 
$V[1,...,i-1]=\{v_1,...,v_{i-1}\}$. This implies that for any path 
$p$ of $G$ there are at most $d$ edges of $p$ going from $V[i,...,n]$ to $V[1,...,n-1]$. But this by its turn implies that there are at 
most $d+1$ edges of $p$ going in the opposite direction, from $V[1,...,n-1]$
to $V[i,...,n]$. Therefore the cut width of $p$ w.r.t. $\omega$ is 
at most $2d+1$. $\square$

\paragraph{\bf Proof of Theorem \ref{theorem:DirectedPathWidthZigZagNumber} - Part II}

\begin{lemma}[\cite{Barat2006}]
\label{lemma:Barat2006}
Let $G$ be an undirected graph, and let $D$ be the digraph obtained 
from $G$ by replacing every undirected edge $\{u,v\}$ with two anti-parallel
directed edges $(u,v)$ and $(v,u)$. Then the {\bf directed} path-width of $D$ 
is equal to the {\bf undirected} path-width of $G$. 
\end{lemma}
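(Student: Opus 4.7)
The plan is to prove equality by showing that the \emph{same} sequence of bags $X_1, \ldots, X_p$ is a path decomposition of $G$ if and only if it is a directed path decomposition of $D$; since the width is defined as the maximum bag size in either case, this immediately yields $dpw(D) = pw(G)$.

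For the easy direction $dpw(D) \leq pw(G)$: given any undirected path decomposition $\mathcal{P} = X_1, \ldots, X_p$ of $G$, I would check that it satisfies the three conditions of a directed path decomposition of $D$. Conditions (i) (vertex coverage) and (ii) (interpolation) are phrased identically in both definitions, so they transfer verbatim. For condition (iii), fix any directed edge $(u,v)$ of $D$; by construction $\{u,v\}$ is an edge of $G$, so there is a bag $X_i$ with $\{u,v\} \subseteq X_i$, and taking $j = i$ fulfills the requirement $u \in X_i$, $v \in X_j$ with $i \leq j$.

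For the reverse direction $pw(G) \leq dpw(D)$: given a directed path decomposition $\mathcal{P}$ of $D$, I would show it is already an undirected path decomposition of $G$. Only the edge condition requires work: for each undirected edge $\{u,v\}$ of $G$, I must find a single bag containing both endpoints. The crucial fact is that $D$ contains both $(u,v)$ and $(v,u)$. Applying condition (iii) to $(u,v)$ gives indices $a \leq b$ with $u \in X_a$ and $v \in X_b$; applying it to $(v,u)$ gives $a' \leq b'$ with $v \in X_{a'}$ and $u \in X_{b'}$. By condition (ii) the set of indices where $u$ appears forms an interval $[u^-, u^+]$, and similarly $[v^-, v^+]$ for $v$. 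From the two pairs of directed-edge indices one obtains $u^- \leq a \leq b \leq v^+$ and $v^- \leq a' \leq b' \leq u^+$, so $u^- \leq v^+$ and $v^- \leq u^+$, which forces the two intervals of integers to intersect; any integer in the intersection indexes a bag containing both $u$ and $v$.

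The core of the argument is the interval-intersection observation in the reverse direction, and this is precisely where the doubling construction pays off: having both orientations of each undirected edge forces $u$ and $v$ to simultaneously occupy a common bag, thereby recovering the undirected edge condition from the weaker directed one. No serious obstacle is anticipated; the equality is essentially a definitional unfolding once the two-sidedness of $D$ is exploited.
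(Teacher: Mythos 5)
Your argument is correct. Note that the paper itself gives no proof of this lemma: it is imported verbatim from Bar\'at's work via the citation, so there is no in-paper argument to compare against. Your self-contained derivation is the natural one and it is sound: the forward direction ($dpw(D)\leq pw(G)$) is indeed immediate since an undirected bag containing both endpoints witnesses condition (iii) with $i=j$ for both orientations, and the reverse direction correctly exploits the two anti-parallel edges. The key steps check out against the definitions as stated in the appendix: condition (ii) ($X_i\cap X_k\subseteq X_j$ for $i<j<k$) does make the set of bags containing a fixed vertex a contiguous interval $[u^-,u^+]$ (nonempty by condition (i)), the two applications of condition (iii) yield $u^-\leq v^+$ and $v^-\leq u^+$, and these two inequalities are exactly the criterion for two integer intervals to intersect, so a common bag contains both $u$ and $v$. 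One small caveat worth flagging: you assert that width is "the maximum bag size in either case," whereas the standard undirected path-width convention subtracts one while the paper's stated directed definition does not; your argument actually proves the stronger statement that the two notions have identical sets of decompositions, so the equality holds under any single consistent width convention, but if the two sides use mismatched conventions the claimed equality is off by one. This does not affect the substance of your proof.
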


The complete binary tree $T(n)$ is the binary tree 
on $n$ vertices in which every level, except possibly
the last is completely filled, and all nodes are as far 
left as possible. In the next Lemma we show that the directed
version of $T(n)$ has zig-zag number $2$ but has unbounded 
directed path-width. 

\begin{lemma}
Let $T(n)$ be the complete binary tree on $n$ vertices, and 
let $D(n)$ be the digraph obtained from $T(n)$ by replacing each 
of its undirected edges by a pair of directed edges of opposite directions.
Then $D(n)$ has zig-zag number $zn(D(n)) \leq  2$, while it has directed 
path-width $dpw(D(n))=\Omega(\log n)$. 
\end{lemma}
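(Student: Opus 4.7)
The plan is to prove the two claims separately.

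For the upper bound $\zigzagnumber(D(n)) \leq 2$, I would exhibit an explicit $2$-topological ordering $\omega$, namely the BFS ordering of $T(n)$: list vertices level by level from the root down, and within each level from left to right. Since $D(n)$ is obtained from $T(n)$ by replacing each edge with a pair of antiparallel directed edges, every directed simple path in $D(n)$ projects to a simple undirected path in $T(n)$, which necessarily has a ``V-shape'': it ascends from a vertex $u$ to the least common ancestor $v$ and then descends to a vertex $w$.

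The crucial structural fact about the BFS ordering is: if a cut position $p$ lies inside level $k$, then every vertex at a level strictly smaller than $k$ has position $\leq p$ while every vertex at a level strictly greater than $k$ has position $> p$. Consequently the only edges of our V-shaped path that can possibly cross the cut are those between level $k$ and one of its neighboring levels $k \pm 1$. The ascending part enters level $k$ at most once, at some vertex $x_u$, and a two-case check on whether $\mathrm{pos}(x_u)\leq p$ or $\mathrm{pos}(x_u)>p$ shows that exactly one of the two edges of the up-part incident to $x_u$ crosses the cut. The same argument gives exactly one crossing for the down-part, producing a total of at most $2$ crossings. The boundary case $\ell_v = k$ (where $v$ itself sits on the cut level) collapses the up- and down-contributions to $0$ or $2$, still within the bound.

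For the lower bound $dpw(D(n))=\Omega(\log n)$, I would invoke Lemma~\ref{lemma:Barat2006}, which equates $dpw(D(n))$ with the \emph{undirected} path-width $pw(T(n))$, and then cite the classical fact that the undirected path-width of the complete binary tree on $n$ vertices is $\Theta(\log n)$ (Scheffler; Ellis--Sudborough--Turner); one route is to observe that path-width is subgraph-monotone and $T(n)$ contains the balanced binary tree of depth $\lfloor \log_2 n \rfloor - O(1)$, whose path-width is $\Omega(\log n)$. This gives $dpw(D(n)) = \Theta(\log n)$, in particular $\Omega(\log n)$, completing the second claim.

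The main obstacle is making the case analysis for the zig-zag bound under BFS watertight: one must argue that, because a simple path in a tree has only one ``peak'' (the LCA), the up-portion and the down-portion each contribute at most one edge across any horizontal boundary, and the level-based stratification of BFS converts boundaries ``within level $k$'' to boundaries between levels. Once that localization is articulated, the bound of $2$ comes out of the short subcase analysis outlined above.
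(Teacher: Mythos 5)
Your proof is correct, and the lower-bound half is essentially the paper's: both invoke Lemma~\ref{lemma:Barat2006} to reduce $dpw(D(n))$ to the undirected path-width of $T(n)$ and then cite the classical $\Theta(\log n)$ bound for complete binary trees. The upper bound, however, takes a genuinely different route. The paper orders the vertices by a \emph{depth-first} (preorder) traversal and observes that the unique tree path between two vertices splits at the least common ancestor into two arms, each of which is \emph{monotone} in the DFS order (ancestors precede descendants), so each arm crosses any cut at most once and subadditivity (Proposition~\ref{proposition:SliceWidthFacts}) gives the bound $2$. You instead use the \emph{BFS} (level) order and argue by localizing crossings: a cut sitting inside level $k$ can only be crossed by path edges incident to the path's (unique, per arm) level-$k$ vertex, and a two-case check gives at most one crossing per arm. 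Both arguments exploit the same structural fact (the single peak of a tree path); the paper's buys brevity because monotonicity immediately bounds each arm's zig-zag number by $1$, while yours requires the level-stratification case analysis, including the boundary case $\ell_v=k$, which you do handle (your phrase ``exactly one'' should be ``at most one'' when an arm starts or ends at level $k$, but the bound of $2$ is unaffected). It is worth noting that your BFS ordering also admits the paper's one-line argument, since along each arm the levels, and hence the BFS positions, are strictly monotone; so the two orderings are interchangeable, and the real difference is only in how the crossing count is organized.
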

\begin{proof}
It is well known that the complete binary tree on $n$ vertices 
has {\bf undirected} path-width $\Theta(\log n)$, and indeed this 
tight bound follows from a characterization of path-width of a graph in terms
of the number of cops needed to capture an invisible robber on it. Applying Lemma 
\ref{lemma:Barat2006} we have that the digraph $D(n)$ derived from $T(n)$ 
has {\bf directed} path-width $\Theta(\log n)$. On the other hand 
we will show that $D(n)$ has zig-zag number at most $2$ for any $n$. Let 
$\omega=(v_1,v_2,...,v_n)$ be the ordering that traverses 
the vertices of $D(n)$ according to a depth first search starting 
from the root. In other words, if $v$ is a vertex of 
$D(n)$, $v^L$ belongs to the left subtree of $v$, and $v^{R}$ belongs
to the right subtree of $v$, then 
$\omega(v) < \omega(v^L) < \omega(v^R)$ (Figure \ref{figure:BinaryTree}). 
We show that $D(n)$ has zig-zag number at most $2$ with respect to $\omega$. Notice that since $D(n)$ has 
the structure of a tree, for any two vertices $v_1,v_2$ there is 
a unique simple path from $v_1$ to $v_2$. Now let $v$ be the minimal vertex such that 
both $v_1$ and $v_2$ are in the subtree rooted on $v$ (Observe that $v$ might be potentially 
equal to $v_1$ or to $v_2$) Then the path from $v_1$ to $v_2$ has 
necessarily to pass trough $v$. Given that $\omega$ is a depth first ordering, 
each each of the paths $p_1$ and $p_2$ from $v_1$ to $v$ and from $v$ to $v_2$ respectively, 
has zig-zag number at most $1$ with respect to $\omega$ (it can have zig-zag number $0$ if $v=v_1$ or $v=v_2$). 
Therefore the path $p_1\cup p_2$ from from $v_1$ to $v_2$ has zig-zag number at most $2$. 
$\square$
\end{proof}

\begin{figure}[!hb] 
\centering 
\includegraphics[scale=0.45]{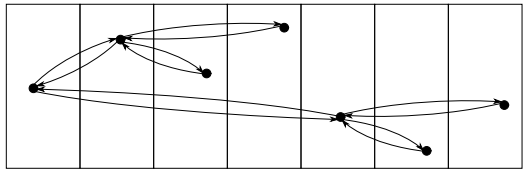}
\caption{ The digraph $D(7)$ obtained from the complete binary tree on $7$ vertices $T(7)$. The ordering 
depicted in the figure has zig-zag number $2$. } 
\label{figure:BinaryTree}
\end{figure}

%

\section{Proofs of Statements from Sections \ref{section:Introduction},\ref{section:Comparison} and \ref{section:zigzagVersusDirectedWidth}}
\label{section:ProofsLowSections} 

\paragraph{\bf Proof of equations \ref{equation:zigzagPathWidth}, \ref{equation:zigzagDwidth} and \ref{equation:zigzagTreeWidth} }
The relation  $zn(G) \precnsim dpw(G)$ comes from our Theorem \ref{theorem:DirectedPathWidthZigZagNumber}. The fact 
that $zn(G)/\log n \preceq s(n)$ is implied by $cr(G)/\log n \preceq s(G)$ together with $zn(G) \precnsim cr(G)$.
Finally $\sqrt{zn(G)/\log n} \preceq dtw(G)$  is implied by $\sqrt{Dw(G)} \leq dtw(G)$ together with $zn(G)/\log n \preceq Dw(G)$.
We observe that we are not aware of a generalization of the well known inequality $pw(G)\leq tw(G)\cdot O(\log n)$, relating
{\bf undirected} path-width to {\bf undirected} tree-width, to the directed setting. In other words we do not know whether $dpw(G)\leq dtw(G)\cdot O(\log n)$. 
Such a generalization would imply $zn(G)/log(n)\preceq dtw(G)$. $\square$

\paragraph{\bf Proof of Corollary \ref{corollary:StructuralCounting}} 
We start from item $10$: Robertson and Seymour's graph minor Theorem states that any minor closed graph property $\mathcal{P}$ 
is characterized by a finite set $\mathcal{F}$ of forbidden minors \cite{FriedmanRobertsonSeymour1987}.
The non-existence of each minor in $\mathcal{F}$ can be expressed by a monadic second order formula $\varphi_{\mathcal{F}}$ (See for example 
\cite{CourcelleEngelfriet2012});
The properties in items $2,4,5,6,7,8,9$ are all closed under minors, and therefore, by Item $10$ there is a monadic 
second order formula expressing each of these properties. $1$: Clearly there exists a \msotwo formula expressing 
connectedness, namelly, that there is a path between any two vertices. 
$3:$ A graph is bipartite if and only if it has no cycle of odd length, which is also easy to express in MSOL. $\square$

\section{Proofs of Results from Section \ref{section:MainTheorem}}
\label{section:LemmasAndPropositions}

\begin{proposition}
\label{proposition:SliceWidthFacts}
Let $H=(V,E)$ be a digraph, and $\omega= (v_1,v_2,...,v_n)$ be a $z$-topological ordering of $H$ of cut width $c$.
\begin{enumerate}
	\item \label{item:ZigZagNumber} If $H'$ is a subgraph of $H$ and $\omega'$ is the restriction of 
		$\omega$ to the vertices of $H'$, then 
		\begin{enumerate}
			\item\label{item:ZigZagNumberA}  $H'$ has cut-width at most $c$ w.r.t. $\omega'$. 
			\item\label{item:ZigZagNumberB} $\omega'$ is a $z$-topological ordering of $H'$. 
		\end{enumerate}
	\item \label{item:ZigZagNumberLessThanCutWidth} The zig-zag number $z$ is at most the cut-width $c$.  
	\item \label{item:ZigZagNumberSubaddtivity}
		If $H'=(V,E')$ is any digraph with same set of vertices as $H$ and if $H'$ has cut-width $c'$ with respect 
		to $\omega$ then $H\cup H'$ has cut-width at most $c+c'$ with respect to $\omega$. 
	\item \label{item:ZigZagNumberPathCovering}
	 	Let $\{p_1,p_2,...,p_k\}$ be a set of not necessarily edge disjoint nor vertex disjoint paths of $H$ such that $H=\bigcup_{i=1}^{k}p_i$. 
		Then $c\leq k\cdot z$. 
\end{enumerate}
\end{proposition}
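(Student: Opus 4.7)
The plan is to treat items 1--3 as bookkeeping directly from the definitions of cut-width and $z$-topological ordering, and then deduce item 4 as a clean combination of (1b) and (3). For item 1(a), I would match cut positions of $\omega'$ with the corresponding positions of $\omega$ (namely, the positions of $\omega$ that fall immediately after some vertex of $H'$); at each such matched pair the partition of $V(H')$ induced by $\omega'$ is simply the restriction of the partition of $V$ induced by $\omega$, so every edge of $H'$ that crosses a cut in $\omega'$ also crosses the corresponding cut in $\omega$, giving the bound $c$. For item 1(b), any directed simple path $p$ of $H'$ is also a directed simple path of $H$ whose vertices lie entirely in $V(H')$, so the edges of $p$ crossing a cut in $\omega'$ are exactly the edges of $p$ crossing the corresponding cut of $\omega$; the latter count is at most $z$ by the $z$-topological property of $\omega$.

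Item 2 is immediate from the definitions: for any directed simple path $p$ and any cut position $i$,
\[
|E_p \cap E(\{v_1,\ldots,v_i\},\{v_{i+1},\ldots,v_n\})| \;\leq\; |E \cap E(\{v_1,\ldots,v_i\},\{v_{i+1},\ldots,v_n\})| \;\leq\; c,
\]
so the maximum defining $z$ is bounded by $c$. Item 3 is additivity of crossing edges at a fixed cut: the set of edges of $H \cup H'$ crossing a given cut is the union of the edges of $H$ crossing it and the edges of $H'$ crossing it, hence of cardinality at most $c + c'$.

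Item 4 is the substantive claim and will follow by combining (1b) with a crude union bound. Since $H = \bigcup_{j=1}^k p_j$, every edge of $H$ crossing the cut at position $i$ belongs to at least one $p_j$, so
\[
|E \cap E(\{v_1,\ldots,v_i\},\{v_{i+1},\ldots,v_n\})| \;\leq\; \sum_{j=1}^k |E_{p_j} \cap E(\{v_1,\ldots,v_i\},\{v_{i+1},\ldots,v_n\})|.
\]
Each $p_j$ is a directed simple path in $H$, so the $z$-topological property of $\omega$ bounds each summand by $z$, yielding $kz$ on the right; maximising over $i$ gives $c \leq k\cdot z$. The only mildly delicate point anywhere in the argument is this union bound, since the paths need not be edge-disjoint; but double counting only weakens the inequality in our favour, so no genuine obstacle arises. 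The proposition is ultimately a careful unwinding of the definitions of cut-width, zig-zag number, and $z$-topological ordering.
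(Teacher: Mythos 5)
Your proof is correct and follows essentially the same route as the paper's: restrict/compare crossing sets for items 1 and 3, bound each path contribution by $z$ and sum for item 4. The only cosmetic divergence is item 2, which you establish directly (a path's crossing edges are a subset of $H$'s crossing edges) whereas the paper phrases it as a short proof by contradiction invoking item 1a; the content is the same.
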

\begin{proof}
{\bf 1a)} Since $H$ has cut width $c$ with respect to $\omega$, we have that there are at most 
$c$ edges with one endpoint in $\{v_1,...,v_i\}$ and other endpoint in $\{v_{i+1},...,v_{n}\}$ for 
each $i$ with $1\leq i\leq n-1$. Therefore there are at most $c$ edges with one endpoint 
in $V(H)\cap \{v_1,...,v_i\}$ and other endpoint in $\{v_{i+1},...,v_{n}\}$ for each such an $i$. Implying 
in this way that $H$ has cut-width at most $c$ w.r.t the ordering $\omega'$ induced by $\omega$. 
{\bf 1b)} Since $\omega$ is a $z$-topological ordering of $H$, any path of $H$ has cut-width at most 
$z$ w.r.t. the restriction $\omega_{p}$ of $\omega$ to the vertices of $p$. Since any path $p'$ of
$H'$ is a sub-path of some path $p$ of $H$, by item 1.a $p'$ has cut-width at most $z$ with respect 
to the ordering $\omega_{p'}$ induced by $\omega_{p}$ (and consequently by $\omega'$) on the vertices of $p'$.
{\bf 2)} The proof is by contradiction. Suppose $z> c$. Then there exists a path $p$ in $H$ that has 
cut-width $z$ with respect to the restriction $\omega'$ of $\omega$ to the vertices of $p$. Since 
$p$ is a subgraph of $H$, by item $1a$, $z<c$ contradicting the assumption that $z>c$. {\bf 3)} 
For any $i$ such that $1\leq i \leq n-1$, there exist at most $c$ edges of $H$ with one endpoint in 
$\{v_1,...,v_i\}$ and other endpoint in $\{v_{i+1},...,v_n\}$. Analogously there are at most $c'$ 
edges of $H'$ with one endpoint in $\{v_1,...,v_i\}$ and other endpoint in $\{v_{i+1},...,v_n\}$. Therefore
there are at most $c+c'$ edges of $H\cup H'$ with one endpoint in $\{v_1,...,v_i\}$ and other endpoint in 
$\{v_{i+1},...,v_{n}\}$ for each such an $i$. Thus $H\cup H'$ has cut-width at most $c+c'$. {\bf 4)} 
Since $\omega$ is a $z$-topological ordering of $H$, each path $p$ of $H$ has cut-width at most $z$
with respect to $\omega$. Since $H=\cup_{i=1}^k p_i$ for paths $p_1,...,p_k$, by item 3 of this proposition,  
$c\leq k\cdot z$. $\square$
\end{proof}

%
\begin{proposition}[Counting Paths in a DAG]
\label{proposition:CountingDAG}
Let $\mathcal{D}=(V,E)$ be a DAG on $n$ vertices and let $S$ and $T$ be two arbitrary 
subsets of $V$. Then one may count the number of distinct simple paths 
from $S$ to $T$ in time $O(n^3)$. 
\end{proposition}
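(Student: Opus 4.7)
The plan is to use a standard reverse topological-order dynamic program, with the twist that the $O(n^3)$ (rather than $O(n^2)$) bound comes from the bit complexity of the counts. The essential observation is that in a DAG every walk is automatically simple, so counting simple paths from $S$ to $T$ reduces to counting walks from $S$ to $T$, which is additive along outgoing edges.

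First I would compute a topological ordering $u_1, u_2, \ldots, u_n$ of $\mathcal{D}$ in time $O(n+|E|) = O(n^2)$. I then define, for each vertex $v$, the quantity $f(v)$ to be the number of simple paths in $\mathcal{D}$ that start at $v$ and end at some vertex of $T$, where a single vertex $v \in T$ is itself counted as the trivial path of length zero. Processing the vertices in reverse topological order, the recurrence is
\begin{equation*}
f(v) \;=\; [\, v \in T \,] \;+\; \sum_{(v,u) \in E} f(u),
\end{equation*}
which is well-defined because every successor $u$ of $v$ appears later in the topological order and hence $f(u)$ has already been computed. The desired count is then $\sum_{v \in S} f(v)$. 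Correctness follows by induction on the position of $v$ in the reverse topological order: each simple $v$-to-$T$ path either is the length-zero path (contributing $[v \in T]$) or begins with a unique edge $(v,u)$ and continues as a simple $u$-to-$T$ path, and conversely every such choice yields a distinct simple path because acyclicity forbids revisits.

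For the running time, the DAG has at most $\binom{n}{2}$ edges, so the number of arithmetic operations is $O(n^2)$. The subtlety is that $f(v)$ may be exponentially large in $n$ (for instance, in a transitive tournament the number of paths is $2^{\Theta(n)}$), so each value requires $O(n)$ bits. Each addition in the recurrence therefore costs $O(n)$ elementary operations, and summing over all $O(n^2)$ edge-evaluations yields the claimed $O(n^3)$ total.

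The only step that requires any care is the bit-complexity accounting; the combinatorial content is entirely routine. I expect no real obstacle, beyond being explicit that, because $\mathcal{D}$ is acyclic, walks and simple paths coincide, so the recurrence genuinely counts simple paths rather than overcounting through would-be repeated vertices.
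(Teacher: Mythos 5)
Your proof is correct and uses essentially the same idea as the paper: a dynamic program that exploits the fact that in a DAG path counts are additive over outgoing edges (and walks coincide with simple paths). The only cosmetic differences are that the paper attaches a super-source/super-sink and iterates the update for height-many rounds (giving $n$ rounds of $O(n^2)$ work), whereas you make a single reverse-topological pass and charge the third factor of $n$ to the bit-length of the counts; both stay within the stated $O(n^3)$ bound.
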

\begin{proof}
We start by adding a vertex $v_0$ to $\mathcal{D}$ and connecting it to each vertex in $S$. 
Similarly, we add a vertex $v_{f}$ to $G$ and add a directed edge from each vertex in $T$ 
to $v_{f}$. We will assign weights for each vertex of $G$. For each $v\in V\cup \{v_0,v_f\}$,
the weight $w(v)$ will count the number of paths from $v$ to $v_{f}$. Indeed we will define a 
sequence of functions $w_0,w_1,...,w_h$ which will converge to $w$ when $h$ is greater or equal
to the height of $\mathcal{D}$, i.e., the size of the longest path from $v_0$ to $v_f$. Initially, 
we will set $w_0(v_{f})=1$ and $w_{0}(v)=0$ for all other vertices in $V\cup \{v_0\}$. 
Subsequently, for each $v\in V\cup\{v_0\}$, let $N^+(v)$ be the set of all forward neighbors of $v$. Then 
set $w_i(v) =\sum_{v'\in N^+(v)} w_{i-1}(v')$ for $i\geq 1$. By induction on the height of $\mathcal{D}$, 
we have that for each $v\in V\cup \{v_0\}$, the weight $w_h(v)$ represents precisely the number of simple 
paths from $v$ to $v_f$. In particular $w_h(v_0)$ represents the number of simple paths from $v_0$ to $v_f$ 
which is equal to the number of simple paths from $S$ to $T$. Therefore, let $w(v)=w_h(v)$ for each 
$v\in V$. 
\end{proof}

\paragraph{\bf Proof of Proposition \ref{proposition:Intersection}}
The inclusion $\lang_{\graph}^{\cap}\subseteq \lang_{\graph}\cap \lang_{\graph}'$ holds for any two slice 
languages $\lang$ and $\lang'$ irrespectively of whether they are saturated or not: Let $H$ be a digraph in 
$\lang^{\cap}_{\graph}$. Then $H$ has a unit decomposition $\boldU=\boldS_1\boldS_2...\boldS_n$ 
in $\lang\cap \lang'$. Since $\boldU\in \lang$, $H\in \lang_{\graph}$ and, since $\boldU\in \lang'$, $H\in \lang_{\graph}'$.
Thus $\lang_{\graph}^{\cap} \subseteq \lang_{\graph}\cap \lang'_{\graph}$. Now we prove that if $\lang'$
is saturated the converse inclusion also holds: Let $H$ be a digraph in $\lang_{\graph}\cap \lang'_{\graph}$. Since 
$\lang$ has zig-zag number $z$, there exists a unit decomposition $\boldU=\boldS_1\boldS_2...\boldS_n$ of $H$ 
of zig-zag number $z$ in $\lang$. Since $\lang'$ is $z$-saturated any unit-decomposition of $H$ of 
zig-zag number at most $z$ is in $\lang'$, and in special $\boldU\in \lang'$. Therefore 
$\boldU\in \lang^{\cap}=\lang\cap \lang'$ and $H\in \lang_{\graph}^{\cap}$. $\square$

\paragraph{\bf Proof of Lemma \ref{lemma:SliceGraphSubgraphs}}
The slice graph $\subgraphsSlicegraph^c(\boldU)=(\mathcal{V},\mathcal{E},\mathcal{S},\mathcal{I},\mathcal{T})$ is constructed as follows: 
For each slice $\boldS_i\in \boldU$ (recall again that $\boldS_i$ belongs to $\slicealphabet^{\widthG}$ not in $\slicealphabet^c$, and that $\widthG \geq c$) 
we let  $\subslicesNumber^c(\boldS_i)$ be the set of all numbered sub-slices of $\boldS_i$
of slice-width at most $c$ (Figure \ref{figure:Subslicings}.$iv$), including slices with empty in-frontier, empty out-frontier 
or both (this last case embraces both the empty slice and slices with a unique vertex in the center and no frontier vertex).    
We should pay attention to the fact that the numbering of the frontier vertices of each such a subslice is inherited 
from the numbering of $\boldS_i$, as illustrated in figure \ref{figure:Subslicings}.$iii$, and thus these subslices are 
not necessarily normalized. This observation is crucial and will play a role in the fact that $\subgraphsSlicegraph^c(\boldU)$
is deterministic. For each such a sub-slice $\boldS\in \subslicesNumber^c(\boldS_i)$ (now $\boldS\in \slicealphabet^c$) 
we add a vertex $\slicegraphvertex_{i,\boldS}$ to $\mathcal{V}$ and label it with $\boldS$ (i.e., $\mathcal{S}(\slicegraphvertex_{i,\boldS})=\boldS$ ).
Subsequently we add an edge $(\slicegraphvertex_{i,S},\slicegraphvertex_{j,S'})$ if and only if $j=i+1$ and if $\boldS$ can be glued to 
$\boldS'$ respecting the numbering of the touching-frontier vertices. Observe that a slice with empty out-frontier can
always be glued to a slice with an empty in-frontier. This last observation allows us to represent some unit decompositions 
of disconnected sub-graphs.  The initial vertices in $\mathcal{I}$ are the vertices labeled with the sub-slices of $\subslicesNumber^c(\boldS_1)$
with empty in-frontier (including the empty slice), while the terminal vertices in $\mathcal{T}$ are those labeled with slices from 
$\subslicesNumber^c(\boldS_n)$ with empty out-frontier. We observe that all the dilated sub-unit-decompositions 
generated by $\subgraphsSlicegraph^c(\boldU)$ will have length $n$, irrespectively of the size of the subgraph of $G$ that 
each of them represents. Therefore each such a sub-unit decompositions 
will be potentially padded with sequences of empty slices to its left and right.
Now it should be clear that a sequence $\boldU=\boldS_1'\boldS_2'...\boldS_n'$ is a sub-unit decomposition of $\boldU=\boldS_1\boldS_2...\boldS_n$
if and only if there exists a sequence $\slicegraphvertex_{1,\boldS_1'}\slicegraphvertex_{2,\boldS_2'}...\slicegraphvertex_{n,\boldS_n'}$ labeled
with $\boldU$. Therefore the slice language $\lang(\subgraphsSlicegraph^c(\boldU))$ represents precisely the set of 
sub-unit decompositions of $\boldU$ of slice-width at most $c$.
As mentioned above, $\subgraphsSlicegraph^c(\boldU)$ is deterministic. 
This fact is guaranteed by the fact that even if a vertex in 
$\subgraphsSlicegraph^c(\boldU)$ has two forward neighbors labeled with slices carrying the same structure, their frontiers 
will forcefully have distinct numberings, and thus will be considered different.
Finally, the construction we just described can be realized in time $n\widthG^{O(c)}$ since there are at most $\binom{q}{O(c)}$ 
subslices of each slice $\boldS_i$ and we only connect vertices in $\subgraphsSlicegraph^c(\boldU)$ labeled with neighboring subslices. 
$\square$


\begin{proposition}
\label{proposition:NumberingExtension}
Let $\slicegraph$ be a slice graph. Then a dilated unit decomposition $\boldU=\boldS_1\boldS_2...\boldS_n$ belongs to $\lang(\slicegraph)$
if and only if the unit decomposition $$\boldU'=(\boldS_1,in_1,out_1)(\boldS_2,in_2,out_2)...(\boldS_n,in_n,out_n)$$ belongs to $\lang(\numberingExtension^q(\slicegraph))$
for a set of pairs of functions $\{(in_i,out_i)\}_{i}$ where $(in_i,out_i)$ is a $q$-numbering of $\boldS_i$. 
\end{proposition}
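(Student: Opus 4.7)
The plan is to prove both implications directly by exhibiting a correspondence between walks in $\slicegraph$ and the walks in $\numberingExtension^q(\slicegraph)$ that project onto them. The construction of $\numberingExtension^q(\slicegraph)$ is in essence a product of $\slicegraph$ with the space of all valid $q$-numberings of the slices labeling its vertices, where an edge of $\slicegraph$ is retained in the product only when the chosen numberings at its two endpoints remain compatible with gluing. Thus walks in $\numberingExtension^q(\slicegraph)$ are exactly walks in $\slicegraph$ paired with a consistent choice of one $q$-numbering per visited vertex.

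For the direction $(\Rightarrow)$, assume $\boldU=\boldS_1\boldS_2\cdots\boldS_n\in\lang(\slicegraph)$ is witnessed by a walk $\slicegraphvertex_1\slicegraphvertex_2\cdots\slicegraphvertex_n$ from $\mathcal{I}$ to $\mathcal{T}$ in $\slicegraph$ with $\mathcal{S}(\slicegraphvertex_i)=\boldS_i$. I would take each $(in_i,out_i)$ to be the \emph{identity} $q$-numbering of $\boldS_i$, i.e., $in_i$ and $out_i$ act as the identity on the frontier labels of $\boldS_i$. Because $\slicegraph$ is normalized, this is a valid $q$-numbering and the relabeled slice $(\boldS_i,in_i,out_i)$ coincides with $\boldS_i$. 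Consequently the vertex $\slicegraphvertex_{i,in_i,out_i}$ is present in $\numberingExtension^q(\slicegraph)$, the gluing condition between consecutive renumbered slices reduces to the original gluing condition already witnessed by $\boldU$, and initial (respectively final) vertices of $\slicegraph$ lift to initial (respectively final) vertices under the natural convention that $\slicegraphvertex_{in,out}\in\mathcal{I}'$ iff $\slicegraphvertex\in\mathcal{I}$, and similarly for $\mathcal{T}'$. Hence $\boldU'\in\lang(\numberingExtension^q(\slicegraph))$.

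For the direction $(\Leftarrow)$, if $\boldU'=(\boldS_1,in_1,out_1)\cdots(\boldS_n,in_n,out_n)\in\lang(\numberingExtension^q(\slicegraph))$ is witnessed by a walk $\slicegraphvertex_{1,in_1,out_1}\cdots\slicegraphvertex_{n,in_n,out_n}$ from an initial to a final vertex, then by the very definition of the edge set of $\numberingExtension^q(\slicegraph)$ each edge of this walk forces a corresponding edge $(\slicegraphvertex_i,\slicegraphvertex_{i+1})\in\mathcal{E}$ in $\slicegraph$. Thus $\slicegraphvertex_1\cdots\slicegraphvertex_n$ is a walk from $\mathcal{I}$ to $\mathcal{T}$ in $\slicegraph$ whose label is $\mathcal{S}(\slicegraphvertex_1)\cdots\mathcal{S}(\slicegraphvertex_n)=\boldS_1\cdots\boldS_n=\boldU$, so $\boldU\in\lang(\slicegraph)$. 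The only delicate point, and the step I expect to demand most care in the write-up, is verifying that the gluing relation transports correctly between $\boldS_i,\boldS_{i+1}$ and their renumbered versions in both directions; this is immediate, however, because gluing depends only on frontier labels and edge orientations, both of which are determined by the chosen $(in_i,out_i)$.
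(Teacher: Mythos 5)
The paper does not spell out a proof of this proposition, but the way it is invoked in the proof of Theorem~\ref{theorem:SliceLanguageSubgraphs} reveals a gap in your forward direction. In that theorem the author fixes a \emph{specific} renumbered string $\boldU'=(\boldS_1'',in_1,out_1)\cdots(\boldS_n'',in_n,out_n)$ (the numberings $in_i,out_i$ are imposed by the sub-unit-decomposition of $\boldU$, not chosen freely), deduces $\boldU''\in\lang(\slicegraph)$, and then concludes via this proposition that \emph{that particular} $\boldU'$ lies in $\lang(\numberingExtension^q(\slicegraph))$. Your forward argument, which picks the identity $q$-numberings, only establishes that \emph{some} renumbering of $\boldU$ is accepted; it would not let one conclude anything about the given $\boldU'$. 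In effect you read ``for a set of pairs of functions'' as an existential bound inside the biconditional, whereas the usage requires the quantification to range over every set of $q$-numberings for which $\boldU'$ is a bona fide slice string (consecutive renumbered slices gluable).

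The fix is small and actually follows the skeleton you already set up. Take the accepting walk $\slicegraphvertex_1\slicegraphvertex_2\cdots\slicegraphvertex_n$ in $\slicegraph$ witnessing $\boldU\in\lang(\slicegraph)$, and lift it to $\slicegraphvertex_{1,in_1,out_1}\cdots\slicegraphvertex_{n,in_n,out_n}$ using the \emph{given} numberings rather than the identity. Each edge of the lifted walk exists because the two conditions defining $\mathcal{E}'$ in $\numberingExtension^q(\slicegraph)$ are both met: $(\slicegraphvertex_i,\slicegraphvertex_{i+1})\in\mathcal{E}$ is supplied by the original walk, and the gluability of $(\boldS_i,in_i,out_i)$ with $(\boldS_{i+1},in_{i+1},out_{i+1})$ is supplied by the hypothesis that $\boldU'$ is a unit decomposition. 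Initial and final vertices transport since renumbering does not alter whether a frontier is empty. Your backward direction (project the walk, noting every edge of $\numberingExtension^q(\slicegraph)$ sits over an edge of $\slicegraph$ by construction) is correct and already gives the universal reading; it is only the forward half that needs the identity choice replaced by the actual $(in_i,out_i)$.
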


\paragraph{\bf Proof of Theorem \ref{theorem:SliceLanguageSubgraphs}}
Let $H$ be a digraph on $k$ vertices and assume that $H$ is a subgraph of $G$ which is isomorphic to a digraph in
$\lang_{\graph}(\slicegraph)$. Since $H$ is a subgraph of $G$, there is 
a dilated unit decomposition $U'=\boldS_1'\boldS_2'...\boldS_n'$ that is a sub-unit decomposition of $\boldU$. 
By Proposition \ref{proposition:SubUnitDecompositionZigZagNumber}.$1$, $\boldU'$ has zig-zag number at most $z$, and 
therefore $\boldU'$ is compatible with some $z$-topological ordering of $\omega=(v_1,v_2,...,v_k)$ of the vertices of $H$.
Now notice that the slices $\boldS_i'$ are not normalized. Therefore there exist a normalized unit decomposition 
$\boldU''=\boldS_1''\boldS_2''...\boldS_n''$ of $H$ such that $\boldS_i'=(\boldS_i'',in_i,out_i)$ for some $q$-numbering 
$(in_i,out_i)$ of $\boldS_i''$. Since $\slicegraph$ is dilated saturated, $H\in \lang_{\graph}(\slicegraph)$ 
and $\boldU''$ also has zig-zag number $z$, we have that $\boldU''\in \lang(\slicegraph)$. Thus by Proposition \ref{proposition:NumberingExtension}, 
$\boldU'\in \lang(\numberingExtension^q(\slicegraph))$, and therefore $\boldU'\in \lang(\subgraphsSlicegraph^c(U)) \cap \lang(\numberingExtension^q(\slicegraph))$. 

Conversely, assume that the numbered dilated unit decomposition 
$$\boldU'=(\boldS_1'',in_1,out_1)(\boldS_2'',in_2,out_2) ... (\boldS_n'',in_n, out_n)$$ of the digraph $H$
belongs to the slice language $\lang(\subgraphsSlicegraph^c(\boldU)) \cap \lang(\numberingExtension^{\widthG}(\slicegraph))$.
Then by proposition \ref{proposition:NumberingExtension}, the unit decomposition $\boldU''=\boldS_1''\boldS_2''...\boldS_n''$ belongs to $\slicegraph$ 
and since $\boldU''$ is also a unit decomposition of $H$, we have that $H\in \lang_{\graph}(\slicegraph)$. Since by 
Lemma \ref{lemma:SliceGraphSubgraphs} all unit decompositions in $\lang(\subgraphsSlicegraph^c(\boldU))$ are sub-unit decompositions 
of $\boldU$ we have that $H$ is a subgraph of $G$. Therefore $H$ is a subgraph of $G$ isomorphic to a digraph in $\lang_{\graph}(\slicegraph)$. 
$\square$

\section{Proof Of Theorem \ref{theorem:MonadicTechnical}}
\label{subsection:MSO}

In the composition $\boldS_1\circ\boldS_2$ of slices defined in 
Section \ref{section:RegularSliceLanguages}, both the out-frontier vertices of $\boldS_1$ 
and the in-frontier vertices of $\boldS_2$ disappear, since they 
are not meant to be part of the structure of the composed graph. 
In this section however, it will be more convenient to consider a slightly different 
composition of slices. In this composition, which we denote by 
$\boldS_1 \oplus \boldS_2$ we simply add an edge from each out-frontier vertex 
of $\boldS_1$ to its corresponding equally numbered in-frontier 
vertex in $\boldS_2$. Indeed we will represent the existence of such 
an edge by a predicate $ConsecutiveFrontiers(u,u')$ which will be 
true whenever $u$ belongs to the out-frontier of a slice $\boldS$,
$u'$ belongs to the in-frontier of a consecutive slice $\boldS'$ and 
they have the same number. By consecutive slices we mean two slices 
that appear in consecutive positions in a slice string. 
We notice that the edge relation of graphs that arise by gluing slices according to the first composition 
can be recovered from the edge relation of the graphs that arise if they 
were composed using $\oplus$. More precisely, let 
$G^{\circ}=(V^{\circ},E^{\circ},s^{\circ},t^{\circ},l^{\circ}_V,l^{\circ}_E)=\boldS_1\circ\boldS_2\circ...\circ \boldS_n$
and $G^{\oplus}=(V^{\oplus},E^{\oplus},s^{\oplus},t^{\oplus},l^{\oplus}_V,l^{\oplus}_E) = \boldS_1\oplus\boldS_2\oplus ...\oplus
\boldS_n$. Then we define the formula $s^{\circ}(X,Y)$ to be true if and only if 
$X$ is the set of edges of a path $v_0e_1u_1u'_1e_2u_2u_2'...u_{k-1}u_{k-1}'e_kv_{\epsilon}$ in $G^{\oplus}$ such that $Y=\{v_0\}$, 
$v_{\epsilon}$ is not a frontier vertex, and $ConsecutiveFrontiers(u_i,u_i')$ holds for every $i$ with $1\leq i\leq k-1$.
Analogously, $t^{\circ}(X,Y)$ is true if $X$ is the set of edges of a path 
 $v_0e_1u_1u'_1e_2u_2u_2'...u_{k-1}u_{k-1}'e_kv_{\epsilon}$ in which $Y=\{v_{\epsilon}\}$, $v_1$ is not a frontier vertex 
and $ConsecutiveFrontiers(f_i,f_i')$ for all all intermediary vertices $u_i,u_i'$ with $1\leq i\leq k-1$. The existence of 
such a path can easily be expressed in \msotwo. Observe that 
the way in which slices are composed and the way in which slice graphs are defined will 
guarantee that for any two non-frontier vertices $v_1,v_{k+1}$ there exists at most one 
path $v_0e_1u_1u_1'e_2u_2u_2'...u_{k-1}u_{k-1}'e_kv_{\epsilon}$ such that all intermediary vertices are frontier vertices. 
Therefore each edge in $G^{\circ}$ will correspond to exactly one such a path in $G^{\oplus}$ and
vice versa. Analogously, the relations $\l^{\circ}_{V}$ and $l^{\circ}_E$ can be
easily simulated in terms of \msotwo formulas involving $l^{\oplus}_{V}$ and $l^{\oplus}_E$. 

Without loss of expressiveness, one may eliminate the need to quantify over first order variables~\cite{Thomas1997}. This will be 
in useful to reduce the number of special cases in the proof of Lemma \ref{lemma:MonadicRegularSet} below. 
The trick is to simulate first order variables via a second order predicate $singleton(X)$ 
which is interpreted as true whenever $|X|=1$ and as false otherwise. To avoid a cumbersome 
notation, whenever we refer to a variable $X$ as being a single vertex or edge, we will assume that $singleton(X)$ is true.
Since we will deal with slices, it will also be convenient to have in hands a relation $frontier(X)$ which is 
true if and only if $X$ represents a frontier vertex, and a relation $samefrontier(X,Y)$ which is true if $X$ and $Y$ 
represent vertices in the same frontier of a slice. More formally, let $\Sigma_V$ be a 
set of vertex labels, $\Sigma_E$ be a set of edge labels, $\{X,Y,..., X_1,Y_1,...\}$ be an infinite set of second order 
variables ranging over sets of vertices and let $\varphi(X)$ denote a formula with free variable $X$. 
Then the set of \msotwo formulas over directed graphs is the smallest set of formulas containing: 

\begin{itemize} 
	\item the atomic formulas, $V(X)$, $E(X)$, $singleton(X)$, $X\subseteq Y$, 
		$s(X,Y)$, $t(X,Y)$, $l_V(X,a)$ for each $a\in \Sigma_V$, $l_E(X,b)$ for each $b\in \Sigma_E$,\\
		$frontier(X)$ and $sameFrontier(X,Y)$;
	\item the formulas $\varphi \vee \psi$, $\varphi \wedge \psi$, $\neg \varphi$ and $\exists X \varphi(X)$, 
		where $\varphi$ and $\psi$ are \msotwo formulas.  
\end{itemize}

Now we follow an approach that is similar to that used in \cite{Thomas1997,Madhusudan2001} but 
lifted in such a way that it will work with slices. Let $\varphi$ be a \msotwo 
formula with $k$ free second order variables $\mathcal{X}=\{X_1,...,X_k\}$ and 
$\boldS$ be a unit slice with $r$ vertices and $r'$ edges (including the frontier vertices).  We represent an interpretation of 
$\mathcal{X}$ in $\boldS$ as a $k\times (r+r')$ boolean matrix $M$ whose rows are 
indexed by the variables in $\mathcal{X}$ and the columns are indexed by the 
vertices and edges of $\boldS$. Intuitively, we set
$M_{ij}=1$ if and only if the vertex or edge of $\boldS$ corresponding to the $j$-th column of $M$ belongs to the $i$-th
variable of $\varphi$. In this setting a sequence $M_1 M_2..M_n$ of
interpretations of a unit decomposition $\boldS_1 \boldS_2...\boldS_n$, in
which $M_i$ is an interpretation of $\boldS_i$, provides a full interpretation
of the graph $\boldS_1\oplus \boldS_2 \oplus ... \oplus \boldS_n$. Let
$\slicealphabet^c$ be the slice alphabet of width $c$. We define the interpreted 
extension of $\slicealphabet^c$ to be the alphabet 

\begin{equation*} 
\interpretedAlphabet=\bigcup_{\boldS\in \slicealphabet^c} \boldS^{\mathcal{X}} \mbox{\, where \, } \boldS^{\mathcal{X}} = \{ (\boldS,M)| M \mbox{ is an interpretation of } \mathcal{X}
\mbox{ over } \boldS\}. 
\end{equation*}

Now we are in a position to prove Lemma \ref{lemma:MonadicRegularSet}. For each formula $\varphi$
over a set of free variables $\mathcal{X}$ we will define a regular subset 
$\lang(\varphi,\interpretedAlphabet)$ of the free monoid generated by
$\interpretedAlphabet$ satisfying the following property: A string
$$(\boldS_1,M_1)(\boldS_2,M_2) ... (\boldS_n,M_n) \in (\interpretedAlphabet)^*$$
belongs to $\lang(\varphi,\interpretedAlphabet)$ if and only if the digraph 
$G=\boldS_1\oplus \boldS_2 \oplus ...\oplus \boldS_n$ satisfies $\varphi(\mathcal{X})$ with
interpretation $M_1M_2...M_n$.  

\paragraph{\bf Proof of Lemma \ref{lemma:MonadicRegularSet}}
\begin{proof}
By the discussion above we start by replacing each occurrence of the atomic formulas $s(X,Y)$, $t(X,Y)$, ...
in $\varphi$ by the atomic formulas $s^{\circ}(X,Y)$, $t^{\circ}(X,Y)$,... so that we can reason in 
terms of the composition $\oplus$ instead of in terms of the composition $\circ$. Let $\boldS_i=(V_i,E_i,s_i,t_i,l_{V_i},l_{E_i})$. 
First we will construct a finite automaton which accepts precisely the interpreted strings  
$$(\boldS_1,M_1)(\boldS_2,M_2)...(\boldS_n,M_n)\in (\interpretedAlphabet)^*$$ for which 
$G=\boldS_1\oplus \boldS_2 \oplus ... \oplus \boldS_n \models \varphi$ with interpretation $M_1,M_2,...,M_n$
of $\mathcal{X}$ over $G$. The proof is by induction on the structure of the formula.
It is easy to see that the atomic formulas $V(X)$, $E(X)$, $singleton(X)$, $X\subset Y$,  $l_{V_i}(X,a)$, $l_{E_i}(X)=b$ 
for each $a \in \Sigma_V$ and $b\in \Sigma_E$, can be checked by a finite automaton. For instance, 
to check whether $X_i\subseteq X_j$ holds in
$(\boldS_1,M_1)(\boldS_2,M_2)...(\boldS_n,M_n)$, the automaton verifies for each
interpretation $M_k$ with $1\leq k\leq n$ and for each column $l$ of $M_k$, that whenever $(M_k)_{il}=1$
then $(M_k)_{jl}=1$.  To determine whether $s^{\oplus}(X,Y)$ (or $t(X,Y)$) 
is true, first check whether $X,Y$ are singletons. If this is not the case, reject. Otherwise
let $X$ be interpreted as $\{e\}$ and $Y$ as $\{v\}$. Then accept either if $e$
and $v$ belong to the same slice and if $s_i(e,v)$, which can be done by table lookup. To determine
whether $ConsecutiveFrontiers(X,Y)$ is true, check whether $X$ and $Y$ are singletons, $X$, belongs
to the out-frontier of a slice and $Y$ to the in-frontier of a consecutive slice. 
Disjunction, conjunction and negation are handled by the fact that DFAs are effectively 
closed under union, intersection and complement. In other words, 
\begin{equation}
\begin{array}{c}
\lang(\varphi \vee \varphi',\interpretedAlphabet) = \lang(\varphi ,\interpretedAlphabet) \cup \lang(\varphi',\interpretedAlphabet) \\
\\
\lang(\varphi \wedge \varphi',\interpretedAlphabet) = \lang(\varphi ,\interpretedAlphabet) \cap \lang(\varphi',\interpretedAlphabet) \\
\\
\lang(\neg \varphi,\interpretedAlphabet) = \overline{\lang}(\varphi ,\interpretedAlphabet)\\
\end{array}
\end{equation}

To eliminate existential quantifiers we proceed as follows: For each variable $X$, 
define the projection $Proj_{X}:\interpretedAlphabet\rightarrow \interpretedAlphabetMinusX$  that 
sends each symbol $(\boldS,M)\in \interpretedAlphabet$ to the symbol $(\boldS,M\backslash X)$ in $\interpretedAlphabetMinusX$
where $M\backslash X$ denotes the matrix $M$ with the row corresponding to the variable $X$ deleted. 
Extend $Proj_{X}$ homomorphically to strings by applying it coordinatewise, and subsequently to languages by applying it stringwise. 
Then set
$$\lang(\exists X \varphi(\mathcal{X}), \interpretedAlphabetMinusX) = Proj_{X}(\lang(\varphi(\mathcal{X}),\interpretedAlphabet)).$$
Notice that even though homomorphisms in general do not preserve regularity, in the case of projections of symbols as defined above, this 
is not an issue. In particular one can obtain a DFA $\mathcal{A}$ accepting $\lang(\exists X \varphi(\mathcal{X},\interpretedAlphabetMinusX)$ 
from a DFA $\mathcal{A}'$ accepting $\lang(\varphi,\interpretedAlphabet)$ by simply replacing each symbol $(\boldS,M)$ appearing in a transition 
of $\mathcal{A}$ by the symbol $(\boldS,M\backslash X)$. 
At the end of this inductive process, all variables will have been projected, since $\varphi$ is a sentence. Thus the language $\lang(\varphi,\slicealphabet^c)$
will accept precisely the slice strings whose composition yield a digraph that satisfies $\varphi$. 
As a last step in our construction, we eliminate illegal sequences of slices, from the language generated by our constructed automaton, 
we intersect it with another automaton that rejects precisely the sequences of slices 
$\boldS_1\boldS_2...\boldS_n$ in which two slices that cannot be composed appear in consecutive positions.
$\square$
\end{proof} 


For a matter of clarity, from now on we will relax our $MSO_2$ language and use lower case letters whenever referring to single edges and vertices. 
Let the predicate $PathVertices(X)$ be true whenever $X$ is the set of vertices of some path, $PathEdges(Y)$ be true whenever $Y$ is the set of edges of some path
and the  $Path(X,Y)$ be true whenever $X$ is the set of vertices and $Y$ the set of edges of the same path. 
Then the fact that a unit decomposition $\boldS_1\boldS_2...\boldS_n$ has zig-zag width at most $z$ can be expressed in \msotwo as 

\vspace{-12pt}
\begin{equation*}
\label{equation:MSOZigZagNumber}
\begin{array}{rcl}
ZigZag(z) & \equiv & (\forall X)(\forall y_1,y_2,...,y_{z+1}) \\
& & [ PathVertices(X) \wedge \bigwedge_{i} y_i \in X \Rightarrow \bigvee_{i\neq j} \neg Samefrontier(y_i,y_j)] \\
\end{array}
\end{equation*}  
\vspace{-12pt}

Basically it says that if $X$ is the set of vertices of a path and if $y_1,...,y_{z+1}$ are $z+1$ vertices in this path then at least two of 
them belong to different frontiers. We say that a digraph $G$ is $k$-path-unitable if there is a set of not necessarily edge disjoint nor vertex disjoint 
paths $\{p_1,...,p_k\}$ such that $G=(V,E)=\cup_{i=1}^k p_i$. The fact that a graph $G$ is $k$-path-unitable can be expressed by 
the formula

\vspace{-6pt}
\begin{equation*}
\label{equation:MSOkPaths}
\begin{array}{rcl}
Unitable(k) & \equiv & (\exists X_1,...,X_k, Y_1,...,Y_k) \\
&  & \left[ V\subseteq \bigcup_i X_i \wedge E\subseteq \bigcup_i Y_i \wedge   \bigwedge_{i} Path(X_i,Y_i)\right] \\
\end{array}
\end{equation*}

\vspace{-6pt}
\paragraph{{\bf Proof of Theorem \ref{theorem:MonadicTechnical}}}
Let $\varphi'=\varphi \wedge ZigZag(z) \wedge Unitable(k)$. By Lemma \ref{lemma:MonadicRegularSet}, there is a 
regular slice language $\lang^{\varphi,k,z}$  over $\slicealphabet^{k\cdot z}$ generating all unit decompositions over $\slicealphabet^{k\cdot z}$
whose composition yields a graph $G$ satisfying $\varphi'$, and in particular $\varphi$. Since the factor $Unitable(k)$ is present in $\varphi'$
all these graphs can be cast as the union of $k$-paths. Since the factor $ZigZag(z)$ is present in $\varphi'$, all unit 
decompositions in $\lang^{\varphi,k,z}$ have zig-zag number at most $z$. It remains to show that every unit decomposition 
of zig-zag number at most $z$ of a graph $H\in \lang_{\graph}^{\varphi,k,z}$ is in $\lang^{\varphi,k,z}$. This follows from the fact that $G$ is 
the union of $k$ directed paths, and from Proposition \ref{proposition:SliceWidthFacts}.\ref{item:ZigZagNumberPathCovering} 
stating that any unit decomposition of zig-zag number at most $z$ of a digraph that is the union of at most $k$ directed paths
has slice-width at most $k\cdot z$. Thus $\lang^{\varphi,k,z}$ is $z$-saturated. To finish the proof set 
$\slicegraph(\varphi,k,z)$ as any slice graph generating  $\lang^{\varphi,k,z}$ $\square$

\end{document}